\newcommand{\recht}[1]{\operatorname{#1}}
\newcommand{\End}{\recht{End}}
\newcommand{\OEnd}{\underline{\End}}
\newcommand{\CC}{\mathcal{C}}
\newcommand{\ob}{\recht{obj}}
\newcommand\iso{\xrightarrow{
   \,\smash{\raisebox{-0.65ex}{\ensuremath{\scriptstyle\sim}}}\,}}
\newcommand{\Hom}{\recht{Hom}}
\newcommand{\id}{\recht{id}}
\newcommand{\cat}[1]{\mathsf{#1}}
\newcommand{\tOR}{\mathtt{OR}}
\newcommand{\tAND}{\mathtt{AND}}
\newcommand{\tSAND}{\mathtt{SAND}}
\newcommand{\BAS}{\mathtt{BAS}}
\newcommand{\R}[1]{\recht{R}_{#1}}
\newcommand{\AT}{\recht{AT}}
\newcommand{\OAT}{\underline{\AT}}
\newcommand{\BB}{\mathbb{B}}
\newcommand{\MBool}{\recht{MBool}}
\newcommand{\OMBool}{\underline{\MBool}}
\newcommand{\ch}{\recht{ch}}
\newcommand{\BU}{\mathtt{BU}}
\newcommand{\cl}[1]{[#1]}
\newcommand{\tzero}{\mathtt{0}}
\newcommand{\tone}{\mathtt{1}}
\newcommand{\struc}[1]{\recht{S}_{#1}}
\newcommand{\phstr}{\varphi^{\recht{struc}}}
\newcommand{\pp}{\mathtt{p}}
\newcommand{\oo}{\mathtt{o}}
\newcommand{\tC}{\mathtt{C}}
\newcommand{\ODAT}{\underline{\recht{DAT}}}
\newcommand{\Bic}{\recht{Bic}}
\newcommand{\OBic}{\underline{\Bic}}
\newcommand{\OADT}{\underline{\recht{ADT}}}
\newtheorem{fact}{Fact}[section]
\newtheorem{definition}[fact]{Definition}
\newtheorem{theorem}[fact]{Theorem}
\newtheorem{lemma}[fact]{Lemma}
\newtheorem*{question}{Question}
\theoremstyle{definition}
\newtheorem{example}[fact]{Example}
\newtheorem{remark}[fact]{Remark}
\newcommand{\hlbox}[1]{%
  \smallskip\begin{center}
  \fboxrule1pt\fboxsep3pt\fcolorbox{black!45}{black!8}{%
  \begin{minipage}{.96\linewidth}#1\end{minipage}}
  \end{center}\smallskip}
  \newcommand*{\ldb}{\{\mskip-5mu\{}
\newcommand*{\rdb}{\}\mskip-5mu\}}
\begin{document}
\title{Attack tree metrics are operad algebras}
%
%

\author{Milan Lopuhaä-Zwakenberg\\
\textit{University of Twente}\\
Enschede, the Netherlands \\
m.a.lopuhaa@utwente.nl
}

\maketitle

\begin{abstract}
Attack Trees (ATs) are a widely used tool for security analysis. ATs can be employed in quantitative security analysis through metrics, which assign a security value to an AT. Many different AT metrics exist, and there exist multiple general definitions that aim to study a wide variety of AT metrics at once. However, these all have drawbacks: they do not capture all metrics, and they do not easily generalize to extensions of ATs. In this paper, we introduce a definition of AT metrics based on category theory, specifically operad algebras. This encompasses all previous definitions of AT metrics, and is easily generalized to extensions of ATs. Furthermore, we show that under easily expressed operad-theoretic conditions, existing metric calculation algorithms can be extended in considerable generality.	
\end{abstract}

\begin{IEEEkeywords}
Attack trees, security analysis, operads, category theory
\end{IEEEkeywords}
\section{Introduction}

\begin{wrapfigure}[18]{r}{4cm}
\vspace{-1em}
\centering
\begin{tikzpicture}[
and/.style={and gate US,rotate=90,draw,fill = lightgray},
or/.style={or gate US,rotate=90,draw,fill = lightgray},
bas/.style={circle,draw,fill = lightgray}]
\draw (-1.25,2) node[or]{};
\draw (0.25,2) node[and]{} ;
\draw[fill = lightgray] (1.75,2) circle (0.3cm);
\draw (-1.25,1.4) node{$\tOR$};
\draw (0.25,1.4) node{$\tAND$};
\draw (1.75,1.4) node{$\BAS$};
\draw (0,0) node[or] (r) {\rotatebox{270}{$r$}};
\draw (0,0.45) node[rectangle,draw,fill=white] {\tiny rob bank};
\draw (-0.5,-1) node[bas] (f) {$f$};
\draw (0.5,-1) node[and] (s) {\rotatebox{270}{$s$}};
\draw (0,-2) node[bas] (b) {$b$};
\draw (1,-2) node[bas] (l) {$l$};
\draw (f) -- (r) -- (s) -- (b);
\draw (s) -- (l);
\draw (-0.5,-0.55) node[rectangle,draw,fill=white] {\tiny by force};
\draw (0.5,-0.55) node[rectangle,draw,fill=white] {\tiny steal};
\draw (-0.2,-1.6) node[rectangle,draw,fill=white] {\tiny break in};
\draw (1.2,-1.6) node[rectangle,draw,fill=white] {\tiny lockpicks};
\end{tikzpicture}
\caption{Attack tree of an attacker robbing a bank ($r$). They can either take the money by force ($f$), or they steal the money ($s$) by purchasing lockpicks for the vault ($l$) and breaking in at night ($b$).}
\label{fig:bank1}
\end{wrapfigure}
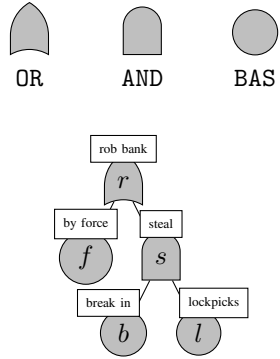

Attack trees (ATs) are hierarchical diagrams that categorize and classify the vulnerabilities of a (high-tech) system. ATs have been employed in a variety of settings, such as railway control systems \cite{dong2017attack}, smart grids \cite{beckers2014determining} and nuclear control systems \cite{khand2007attack}. ATs consist of \emph{basic attack steps} (BASs), indivisible basic actions which can be performed by an adversary, as well as $\tAND$/$\tOR$-gates whose activation depends on the activation of their children. An attack is considered succesful if the top node is activated. Note that an AT is not necessarily a tree, as a node may have multiple parents. Besides `standard' ATs numerous extensions exists, usually with extra gates, such as sequential AND-gates \cite{jhawar2015attack} or defenses \cite{kordy2010foundations}.

Apart from the qualitative analysis of categorizing all attacks, ATs can also be used for quantitative analysis, in which a system's vulnerability is expressed by a security value. This is usually done by assigning a value to each BAS, and then using these to calculate the system's value. For instance, each BAS can be assigned a cost value representing the resources the attacker has to spend to perform this BAS; given these, the \emph{min cost} metric is then the minimal cost of a succesful attack \cite{schneier1999attack}. There are many other such metrics, such as the minimal time of a succesful attack, mean time to compromise, or attack damage \cite{mcqueen2006time}. Because many AT metrics are relevant for security analysis, there is a demand for a general framework in which AT metrics can be formulated and computed.

\begin{question}
What is a general definition for AT metrics, and what algorithms exist to calculate general AT metrics?
\end{question}

\begin{wrapfigure}[18]{r}{4cm}
\label{fig:mintime}
\centering
\vspace{-1em}
\begin{tikzpicture}[
and/.style={and gate US,rotate=90,draw,fill = lightgray},
or/.style={or gate US,rotate=90,draw,fill = lightgray},
bas/.style={circle,draw,fill = lightgray}]
\draw (0,0) node[and] (r) {\rotatebox{270}{\small $\rightarrow$}};
\draw (0,-1) node [bas] (a) {{\small $a$}};
\draw (a) edge[bend right]  (r);
\draw (a) edge[bend left]  (r);
\draw (1.5,-1) -- (2.5,0);
\draw (1.5,-1) node [bas] (b) {\small $b$};
\draw (2.5,-1) node [bas] (c) {\small $c$};
\draw (1.5,0) node [and] (a1) {\rotatebox{270}{\small $\rightarrow$}};
\draw (2.5,0) node [and] (a2) {};
\draw (2,1) node [or] (o) {};
\draw (b) -- (a1) -- (o) -- (a2) -- (c);
\draw (c) -- (a1);
\draw (0,-1.5) node {$T_1$};
\draw (2,-1.5) node {$T_2$};
\end{tikzpicture}

\begin{tabular}{c|cc}
& $T_1$ & $T_2$ \\ \hline
\cite{jhawar2015attack} & $2t_a$ & $\max\{t_b,t_c\}$ \\
\cite{kumar2015quantitative} & $t_a$ & $\max\{t_b,t_c\}$ \\
\cite{9925106} & $\infty$ & $t_b+t_c$ \\
\cite{lopuhaa2021attack} & $\infty$ & $\max\{t_b,t_c\}$
\end{tabular}

\caption{The conflicting definitions of the \emph{min time} metric in previous work, for dynamic ATs $T_1$ and $T_2$. The time of BAS $a$ is denoted $t_a$.}
\end{wrapfigure}
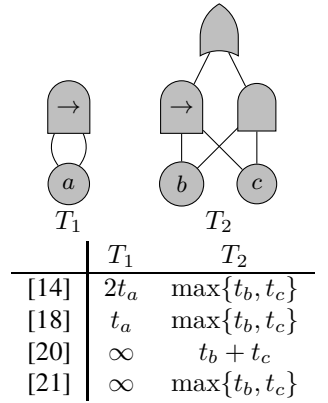

Approaches to a generic formalization of AT metrics have been proposed in the literature \cite{mauw2005foundations,bossuat2017evil,9925106}. These assume the metric to take values in a \emph{semiring}: a set in which the BASs take their value (e.g., $\mathbb{R}_{\geq 0}$ for \emph{min cost}), that has two operations satisfying certain axioms \cite{mauw2005foundations}, which are used to define the AT's security value in terms of that of the BASs. However, different works have different ways of defining the AT metric in terms of the BAS values, leading to different, noncompatible definitions of the same metric. For instance, the min time metric of dynamic ATs (with a sequential AND gate) has four different definitions in the literature, that are incompatible even for small examples, see Fig.~2. In particular, there is not one definition of \emph{semiring metrics}, but at least three noncompatible ones, see Table 1.

The fact that many metrics have different, noncompatible definitions is likely due to the fact many papers define their metric along with a computation algorithm: \emph{min time} as defined in \cite{jhawar2015attack} comes with a bottom-up algorithm, while the \emph{min time} of \cite{kumar2015quantitative} is calculated via priced-timed automata. As a result, metrics are often defined in a way to fit the algorithm, at a cost of possibly being inconsistent with earlier literature. Another approach is to define the metric directly from the semantics \cite{budde2021efficient,lopuhaa2021attack}; calculating these metrics is NP-hard.

Besides the existence of noncompatible metric formalizations, there are two other issues with existing AT metric formalizations. First, there are still metrics that do not fit into any existing framework, such as total attack probability \cite{rauzy1997exact}. Second, semiring formalizations are only defined for standard ATs (just $\tAND/\tOR$-gates). Extending them to extensions of ATs typically requires a completely new definition \cite{kordy2018quantitative,9925106}. 

Summarizing the previous paragraphs, we conclude that there is a need for a formal framework for AT metrics that is generic enough to capture all existing frameworks and metrics, and has straightforward extensions to extensions of ATs such as dynamic ATs and attack-defense trees.


\noindent \textbf{Contributions.} In this paper, we create a single, comprehensive framework for AT metrics based on category theory. This framework aims to encompass all sensible metrics; while it is possible to construct pathological counterexamples, we show that the necessary and sufficient conditions for a metric to fall within our framework are completely natural, and indeed are met by almost all metrics from the literature. Hence, rather than adding yet another row to the table in Fig.~2, we provide an overarching definition of which all existing definitions are special cases. Furthermore, this new framework has straightforward extensions to extensions of the AT formalism. An overview of existing metrics and metric formalisms that are now unified under a single  framework is given in Table I.

Existing definitions of AT metrics define how, given a value $x_a$ in a set $X$ for each BAS $a$, and an AT $T$, one calculates a metric value $x_T \in X$. We shift perspective, and in this paper a metric assigns to each AT $T$ with $n$ BASs a function $\varphi_T\colon X^n \rightarrow X$ that maps the BAS values $\vec{x}$ to $x_T$. This is more easily understood in terms of \emph{operads} \cite{markl2002operads}, which generalize sets of multivariate functions on a set. The sets $\End_n(X)$ of maps $X^n \rightarrow X$, together with a natural composition operation, form an operad $\OEnd(X)$. We show that ATs form an operad $\OAT$; thus we define an AT metric as an \emph{operad morphism} $\varphi\colon \OAT \rightarrow \OEnd(X)$, i.e., $X$ is an \emph{operad algebra}. This approach is closely related to other operad models of treelike structures, such as phylogenetic trees \cite{baez2015operads} and proof trees \cite{meseguer1989general}. Operads are a natural tool to formalize treelike structures, since their composition operation mirrors the hierarchical nature of trees. Note that $X$ need not be a set but can be an object in a general category (e.g. topological spaces), which implies special properties of the function $\varphi_T$ (e.g. continuity).

\begin{table*}
\centering
\begin{tabular}{ll|ccccccc|c}
metric type &  source & cost & time & damage & probability & skill & satisfiability & Pareto fronts & Operad metric? \\ \hline 
\multicolumn{10}{c}{standard ATs} \\ \hline
Total attack probability & \cite{rauzy1997exact} &&&&+&&&& yes \\
propositional semiring metrics & \cite{9925106} & +&+&+&+&+&+&+& yes \\
bottom-up semiring metrics & \cite{mauw2005foundations}&+&+&+&+&+&+&& yes \\
set semantics semiring metrics & \cite{bossuat2017evil}&+&+&+&+&+&+&+& yes \\
mean time to compromise & \cite{mcqueen2006time} &&+&&&&&&yes/no
\\ \hline
\multicolumn{10}{c}{Dynamic ATs} \\ \hline
well-formed propositional dynamic semiring & \cite{budde2021efficient}& & +&&&&&& yes \\
propositional dynamic semiring & \cite{lopuhaa2021attack,9925106} && +&&&&&& yes   \\
bottom-up dynamic semiring &  \cite{jhawar2015attack} && +&&&&&& yes \\
priced-timed-automata & \cite{kumar2015quantitative}& + & + & + &&&&+& yes\\ \hline
\multicolumn{10}{c}{Attack-defense trees} \\ \hline
bottom-up attribute domain &\cite{kordy2018quantitative,fila2019efficient} & + & + & + & + & + & +&+& yes \\
set semantics attribute domain & \cite{kordy2018quantitative} & + & + & + & + & + & +&+& no \\
\end{tabular}
\caption{Attack tree metric formalizations from the literature and the metrics that fall under their framework. Different entries in one column are noncompatible definitions. Most of these are unified under the operad algebra framework; for \emph{mean time to compromise} this comes at a loss of computational efficiency, see Section \ref{sec:ubi}.}
\end{table*}

Beyond the definition the operad algebra framework, we also give a list of necessary and sufficient conditions for an AT metric to be considered an operad algebra. As our metric is very general, these are almost always satisfied. Showing that a metric is an operad algebra has two advantages, the first being that it acts as a type of sanity check: the only nontrivial axiom for an operad algebra is that the metric should respect the hierarchical structure of attack trees. This is desired behaviour for many metric definitions, and operad algebras form a natural formalization for this.

Second, expressing a metric in the operad algebrad framework gives access to the metric calculation algorithms developed in this paper. By design, our definition is very general, and we cannot hope for a  fast general algorithm for all metrics, especially since this is known to be NP-complete \cite{9925106}. Instead, we formulate two existing algorithms, the bottom-up algorithm \cite{mauw2005foundations} and a BDD-based algorithm \cite{9925106}, in the language of operad algebras, and we give sufficient conditions on metrics as to when they can be applied. Both are state-of-the-art for the metrics to which they apply; thus there is no loss in efficiency incurred by the increased generality of our framework.

As a final contribution, this paper also showcases how the operad algebra framework can be extended to extensions of the AT formalism, by showing how this is done for dynamic ATs \cite{jhawar2015attack} and attack-defense trees \cite{kordy2010foundations}. These extensions also generalize a number of metrics as summarized in Table I.



\hlbox{%
\emph{Contributions:}
We create generalized operad-theoretical framework for attack tree metrics which:
\begin{enumerate}
\item encompasses a wide variety of existing AT metrics;
\item can straightforwardly be extended to extensions of the AT formalism;
\item can be used in a general, category-theoretical setting;
\item has easy to understand necessary and sufficient conditions;
\item allows for a generalization of state-of-the-art metric computation algorithms to a wide variety of metrics.
\end{enumerate}
}

\subsection{Related work}

There are three approaches to defining AT metrics in terms of semirings. The first one \cite{mauw2005foundations} defines a metric bottom-up: At every node, a value is calculated based on its childrens' values, with the operation determined by the type of gate. The AT's metric value is the top node's value. This approach has also been extended to dynamic ATs \cite{jhawar2015attack} and attack-defense trees \cite{kordy2010foundations}. By definition, these metrics are quickly  computed via a bottom-up algorithm. However, this model assumes that a node must be activated once for each parent. It can be argued that this limits the scenarios that can be modeled by ATs \cite{lopuhaa2021attack}.

Another approach is to first define a metric value to each (minimal) succesful attack, and then combine them into a single metric value for the AT \cite{9925106}. For instance, for \emph{min cost} one first calculates the cost of each succesful attack, and then takes the minimum over all of these. This too can be extended to attack-defense trees \cite{kordy2018quantitative}, and with a bit more effort to dynamic ATs \cite{9925106}. This approach is more flexible than the bottom-up approach, but calculating such metrics is NP-hard in general. In \cite{9925106} a BDD-based approach to metric calculation is proposed, which is still worst-case exponential, but considerably faster in practice. The BDD approach has also been applied to total attack probability \cite{bryant1992symbolic}. The third approach to semiring metrics \cite{bossuat2017evil} is a middle ground between the first two: first a candidate set of attacks is created bottom-up, and then this set is used as the set of minimal attacks in the previous approach. The created set contains the set of minimal attacks, so these metrics coincide with those of \cite{9925106} when minimal metrics are known to be optimal, which is true for most but not all semirings.

Operads and their algebras originate in algebraic topology, but have many applications in other fields \cite{markl2002operads}. In computer science, they have been applied to wiring diagrams \cite{yau2018operads}, design specification \cite{foley2021operads}, and information theory \cite{bradley2021entropy}. Operads have also been used to model treelike structures such as phylogenetic trees \cite{baez2015operads} and proof trees \cite{meseguer1989general}.
Finally, ATs are closely related to fault trees, which are used in safety \cite{limnios2013fault}. Since the mathematical formalisms are the same, the results of this paper can also be applied to fault tree metrics such as \emph{mean time to fail}.

\noindent \textbf{Notation.} Isomorphisms are written $\iso$. 
Categories are $\cat{Set},\cat{Pos},\cat{Top}$,... The set $\{1,\ldots,n\}$ is written $\cl{n}$ for $n \in \mathbb{Z}_{\geq 0}$. We let $\BB = \{\tzero,\tone\}$. Multisets are denoted $\ldb \ldots \rdb$, with multiset union denoted $\uplus$.

\section{Attack trees}

In this section, we review the formalism of ATs. They are defined as follows:

\begin{definition} \label{def:at}
An \emph{attack tree} is a triple $T = (N,E,\gamma)$, where $(N,E)$ is a rooted directed acyclic multigraph and $\gamma$ is a function $\gamma\colon N \rightarrow \{\tOR,\tAND,\BAS\}$, such that $\gamma(v) = \BAS$ if and only if $v$ is a leaf of $(N,E)$.
\end{definition}

\emph{Multigraph} here means that $E$ is a multiset in $N \times N$, meaning there can be multiple edges between two nodes. We need this to model ATs such as $T_1$ in Fig.~2, as well as the constructs that come up when we merge BASs in Section \ref{sec:bu}.

The root of $T$ is denoted $\R{T}$, and the children of a node $v$ form the multiset $\ch(v) = \ldb w \in N \mid (v,w) \in E \rdb$. If $\ch(v) = \ldb v_1,\ldots,v_n \rdb$, we also write $v = \tOR(v_1,\ldots,v_n)$ or $\tAND(v_1,\ldots,v_n)$ depending on $\gamma(v)$; the order of the $v_i$ does not matter here. The set of BASs is denoted $B_T$; we write $B$ if there is no confusion. For $v \in N$, we let $T_v$ be the full subDAG of $T$ with root $v$, i.e., the DAG consisting of all descendants of $v$. Note that despite their names, ATs are not necessarily trees.


The role of an AT is to identify possible attacks on a system. The set of attacks on $T$ with $B_T = \{a_1,\ldots,a_n\}$ is $\BB^n$, where $\vec{b} \in \BB^n$ represents the attack caused by activating all $a_i$ for which $b_i = \tone$. The effect of an attack on the AT is given by its \emph{structure function} $\struc{T}$, where $\struc{T}(\vec{b},v) \in \mathbb{B}$ expresses whether the attack $\vec{b}$ activates the node $v$:

\begin{definition} \label{def:struc}
Let $T$ be an AT with $B_T = \{a_1,\ldots,a_n\}$. The \emph{structure function} of $T$ is the function $\struc{T}\colon \BB^n \times N \rightarrow \BB$ defined as
\begin{equation*}
\struc{T}(\vec{b},v) = \begin{cases} 
b_i, & \textrm{ if $v = a_i$},\\
\bigwedge_{w \in \ch(v)} \struc{T}(\vec{b},w), & \textrm{ if $\gamma(v) = \tAND$},\\
\bigvee_{w \in \ch(v)} \struc{T}(\vec{b},w), & \textrm{ if $\gamma(v) = \tOR$}.
\end{cases}
\end{equation*}
\end{definition}

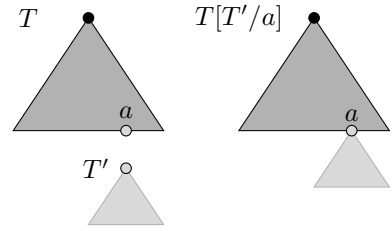
\begin{figure}
\centering
\begin{tikzpicture}
\draw[fill = black!30] (6,0) -- (5,-1.5) -- (7,-1.5) -- cycle;
\draw[black!30, fill = black!15] (6.5,-1.5) -- (7,-2.25) -- (6,-2.25) -- cycle;
\filldraw (6,0) circle (2pt);
\draw[fill = black!15] (6.5,-1.5) circle (2pt);
\draw (5,0) node {$T[T'/a]$};
\draw (6.5,-1.3) node {$a$};
\draw[fill = black!30] (3,0) -- (2,-1.5) -- (4,-1.5) -- cycle;
\filldraw (3,0) circle (2pt);
\draw[fill = black!15] (3.5,-1.5) circle (2pt);
\draw (2.2,0) node {$T$};
\draw (3.5,-1.25) node {$a$};
\draw[black!30, fill = black!15] (3.5,-2) -- (4,-2.75) -- (3,-2.75) -- cycle;
\draw[fill = black!15] (3.5,-2) circle (2pt);
\draw (3.1,-2) node {$T'$};
\end{tikzpicture}
\caption{Modular composition of ATs.} \label{fig:mod}
\end{figure}

\begin{definition} \label{def:semantics}
An attack $\vec{b} \in \BB^n$ is called \emph{successful} if $\struc{T}(\vec{b},\R{T}) = \tone$. The set of successful attacks on $T$ is denoted $\recht{Suc}_T$.
\end{definition}

An important way of combining ATs is \emph{modular composition} \cite{dutuit1996linear}: in an AT $T$, a BAS $a$ is replaced by an entire AT $T'$, to obtain the larger AT $T[T'/a]$, see Fig.~\ref{fig:mod}. This is used, for instance, when in an AT created by a domain expert a BAS $a$, which was first considered its own indivisible event, is given its own AT to describe its failure conditions. Formally:

\begin{definition} \label{def:mod}
Let $T = (N,E,\gamma)$ and $T' = (N',E',\gamma')$ be ATs with $N \cap N' = \varnothing$, and let $a \in N$ such that $\gamma(a) = \BAS$. Then the \emph{modular composition $T[T'/a] = (N'',E'',\gamma'')$} is the AT defined as
\begin{align*}
N'' &= (N \setminus \{a\}) \cup N',\\
\gamma''(w) &= \begin{cases}
\gamma(w), & \textrm{ if $w \in N$},\\
\gamma'(w), & \textrm{ if $w \in N'$},
\end{cases}\\
E'' &= \ldb (v,w) \in E \mid w \neq a \rdb \uplus E'\\
& \quad \quad \quad \uplus \ldb (w,\R{T'}) \mid w \in N, (w,a) \in E \rdb.
\end{align*}
\end{definition}

Besides categorizing succesful attacks on a system, ATs can also be used in quantitative security analysis to calculate security metrics, such as the minimal time or cost an attacker needs for a succesful attacks. Such metrics work by assigning a certain security value $\alpha_a$ to each BAS, and using these to calculate a security value $\alpha_T$ for the AT. However, there are many ways of defining $\alpha_T$ in terms of the $\alpha_a$, leading to noncompatible definitions of AT metrics in the literature. In the following sections, we introduce an overarching framework.

The concept of ATs has been extended by multiple works to meet the demands of more elaborate forms of security analysis. These extensions typically define new types of gates, such as defense gates \cite{kordy2010foundations} or sequential $\tAND$-gates \cite{jhawar2015attack,9925106}. Such extensions often require completely new definitions of metrics.

\section{Operads}

In this section we will define operads and their morphisms. Operads arise as a way to axiomatize collections of maps $X^n \rightarrow X$ for a set $X$ \cite{markl2002operads}. For $n \in \mathbb{Z}_{\geq 0}$ and a set $X$, consider the sets of maps $\End_n(X) := X^n \rightarrow X$ (so $\End_0(X) = X$). As we range over $n$, the sets $\End_n(X)$ have the following structures:
\begin{enumerate}
\item For each permutation $\sigma\colon [n] \iso [n]$ there is a map $\tau_{\sigma}\colon \End_n(X) \iso \End_{n}(X)$ defined as follows: for $f \colon X^n \rightarrow X$ and $\vec{x} \in X^n$ one has $\tau_{\sigma}(f)(\vec{x}) = f(x_{\sigma(1)},\ldots,x_{\sigma(n)})$, see Fig.~\ref{fig:tau}.
\item There is an element $\id \in \End_1(X)$ representing the identity $X \iso X$.
\item Let $n \in \mathbb{Z}_{\geq 1}$ and let $m_1,\ldots,m_n \in \mathbb{Z}_{\geq 0}$; and let $f \in \End_n(X)$ and $g_i \in \End_{m_i}(X)$ for all $i$. We then define the composition $f \star \vec{g}\in \End_{\sum_i m_i}(X)$ by first applying each $g_i$ to its own $m_i$ arguments, and then applying $f$ to the $n$ resulting outcomes. More formally, for all $x_{1,1},\ldots,x_{1,m_1},\ldots,x_{n,1},\ldots,x_{n,m_n} \in X$:
\begin{align*}
&(f \star \vec{g})(x_{1,1},\ldots,x_{n,m_n})\\
&= f(g_1(x_{1,1},\ldots,x_{1,m_1}),\ldots,g_n(x_{n,1}\ldots,x_{n,m_n})).
\end{align*}
\end{enumerate}

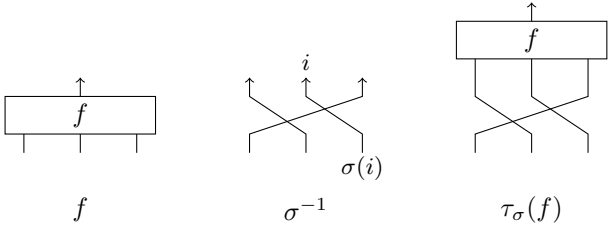
\begin{figure}
\centering
\begin{tikzpicture}
\draw (0,0) -- (2,0) -- (2,-0.5) -- (0,-0.5) -- cycle;
\node at (1,-0.25) {$f$};
\draw[->] (1,0) -- (1,0.25);
\draw (0.25,-0.5) -- (0.25,-0.75);
\draw (1,-0.5) -- (1,-0.75);
\draw (1.75,-0.5) --  (1.75,-0.75);
\node at (1,-1.5) {$f$};

\draw[<-] (3.25,0.25) -- (3.25,0) -- (4,-0.5) -- (4,-0.75);
\draw[<-] (4,0.25) -- (4,0) -- (4.75,-0.5) -- (4.75,-0.75);
\draw[<-] (4.75,0.25) -- (4.75,0) -- (3.25,-0.5) -- (3.25,-0.75);
\node at (4, 0.45) {\small $i$};
\node at (4.75, -0.95) {\small {$\sigma(i)$}};
\node at (4,-1.5) {$\sigma^{-1}$};

\draw (6.25,0.5) -- (6.25,0) -- (7,-0.5) -- (7,-0.75);
\draw (7,0.5) -- (7,0) -- (7.75,-0.5) -- (7.75,-0.75);
\draw (7.75,0.5) -- (7.75,0) -- (6.25,-0.5) -- (6.25,-0.75);
\node at (7,-1.5) {$\tau_{\sigma}(f)$};
\draw (6,1) -- (8,1) -- (8,0.5) -- (6,0.5) -- cycle;
\draw[->] (7,1) -- (7,1.25);
\node at (7,0.75) {$f$};
\end{tikzpicture}
\caption{The map $\tau_{\sigma}(f)$ represented diagrammatically: first $\sigma^{-1}$ is applied to permute the arguments, and then $f$ is applied to the permuted arguments.} \label{fig:tau}
\end{figure}

The constructs $\tau_{\sigma}$, $\id$, $\star$ satisfy a number of different properties; for instance $\id \star f = f$ for each $f \in \End_n(X)$. The idea behind operads is to take the collection of all $\End_n(X),\tau_\sigma,\id,\star$ as the primordial object, rather than the set $X$; we call this new object $\OEnd(X)$, the \emph{operad} of maps on $X$. Thus, an operad can be thought of as a collection of maps, but examples will show that the definition is more general.

\begin{definition}\label{def:operad}
An \emph{operad} is a tuple $\underline{O} = (O,\tau,\id,\star)$ consisting of: a set $O_n$ for each $ n \in \mathbb{Z}_{\geq 0}$; a bijection $\tau_{\sigma}\colon O_n \iso O_n$ for each bijection $\sigma\colon [n] \iso [n]$; an element $\id \in O_1$; and for each $n,m_1,\ldots,m_n \in \mathbb{Z}_{\geq 0}$, a map $\star\colon O_n \times \prod_{i=1}^n O_{m_i} \rightarrow O_{\sum_i m_i}$. These must satisfy the following axioms:
\begin{enumerate}
    \item For each $\sigma,\sigma'\colon [n] \iso [n]$ one has $\tau_{\sigma \circ \sigma'} = \tau_{\sigma} \circ \tau_{\sigma'}$;
    \item For each $n$ and $f \in O_n$ one has $\id \star f = f \star (\id,\ldots,\id) = f$;
    \item For all $n,m_1,\ldots,m_n,k_{1,1},\ldots,k_{n,m_n} \in \mathbb{Z}_{\geq 0}$ and for all $f \in O_n$, $g_{i} \in O_{m_i}$ and $h_{i,j} \in O_{k_{i,j}}$ one has ($i \leq n$ and $j \leq m_i$)
    \begin{align*}
    &f \star (g_1 \star (h_{1,1},\ldots,h_{1,m_1}),\ldots,g_n \star (h_{n,1},\ldots,h_{n,m_n}))\\
    &= (f \star \vec{g}) \star \vec{h};
    \end{align*}
    \item For all $n,m_1,\ldots,m_n \in \mathbb{Z}_{\geq 0}$, $f \in O_n$, $g_i \in O_{m_i}$, and $\sigma_i\colon [m_i] \iso [m_i]$ one has
    \begin{equation*}
    f \star (\tau_{\sigma_1}(g_1),\ldots,\tau_{\sigma_n}(g_n)) = \tau_{(\sigma_1,\ldots,\sigma_n)}(f \star \vec{g}).
    \end{equation*}
    where $(\sigma_1,\ldots,\sigma_n)\colon \left[\sum_i m_i\right] \rightarrow \left[\sum_i m_i\right]$ is given by $(\sigma_1,\ldots,\sigma_n)\left(\sum_{i'< i} m_{i'}+k\right) = \sum_{i' < i} m_{i'} + \sigma_{i}(k)$, for all $i \leq n$ and $k \leq m_{i+1}$.
    \item For all $n,m_1,\ldots,m_n \in \mathbb{Z}_{\geq 0}$, $f \in O_n$, $g_i \in O_{m_i}$, and $\sigma\colon [n] \iso [n]$ one has
    \begin{equation*}
    \tau_{\sigma}(f) \star \vec{g} = f \star (g_{\sigma(1)},\ldots,g_{\sigma(n)});
    \end{equation*}
\end{enumerate}
\end{definition}

Most of these axioms are understandable from the perspective of $\End_n(X)$. In 4), the permutation $(\sigma_1,\ldots,\sigma_n)$ keeps the $n$ `blocks' of $m_i$ consecutive elements in place, while permuting each block via $\sigma_i$. The equality is then shown for $\OEnd(X)$ by the diagram of Fig.~\ref{fig:tau2}.

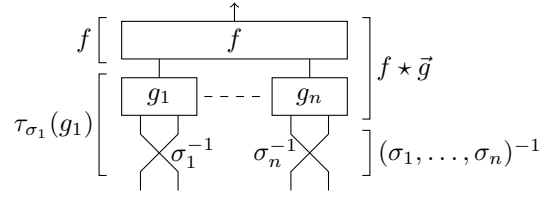
\begin{figure}
\centering
\begin{tikzpicture}
\draw (0,0) -- (3,0) -- (3,-0.5) -- (0,-0.5) -- cycle;
\node at (1.5,-0.25) {$f$};
\draw[->] (1.5,0) -- (1.5,0.25);
\draw (0.5,-0.5) -- (0.5,-.75);
\draw (0,-.75) -- (1,-.75) -- (1,-1.25) -- (0,-1.25) -- cycle;
\node at (0.5,-1) {$g_1$};
\draw (0.25,-1.25) -- (0.25,-1.5) -- (0.75,-2) -- (0.75,-2.25);
\draw (0.75,-1.25) -- (0.75,-1.5) -- (0.25,-2) -- (0.25,-2.25);
\node at (0.95,-1.75) {$\tiny \sigma_1^{-1}$};
\draw[dashed] (1.1,-1) -- (1.9,-1);
\draw (2.5,-0.5) -- (2.5,-.75);
\draw (2,-.75) -- (3,-.75) -- (3,-1.25) -- (2,-1.25) -- cycle;
\node at (2.5,-1) {$g_n$};
\draw (2.25,-1.25) -- (2.25,-1.5) -- (2.75,-2) -- (2.75,-2.25);
\draw (2.75,-1.25) -- (2.75,-1.5) -- (2.25,-2) -- (2.25,-2.25);
\node at (2.05,-1.75) {$\tiny \sigma_n^{-1}$};
\draw (-0.2,0.05) -- (-0.3,0.05) -- (-0.3,-0.55) -- (-0.2,-0.55);
\draw (-0.2,-0.7) -- (-0.3,-0.7) -- (-0.3,-2.05) -- (-0.2,-2.05);
\draw (3.2,0.05) -- (3.3,0.05) -- (3.3,-1.3) -- (3.2,-1.3);
\draw (3.2,-1.45) -- (3.3,-1.45) -- (3.3,-2.05) -- (3.2,-2.05);
\node at (-0.5,-0.25) {$f$};
\node at (-0.9,-1.375) {$\tau_{\sigma_1}(g_1)$};
\node at (3.75,-.625) {$f \star \vec{g}$};
\node at (4.5,-1.75) {$(\sigma_1,\ldots,\sigma_n)^{-1}$};
\end{tikzpicture}
\caption{Axiom 4) of Definition \ref{def:operad} for the endomorphism operad $\OEnd(X)$: Applying $f$ to arguments of the form $\tau_{\sigma_i}(g_i)$ yields the same function as first permuting all arguments via $(\sigma_1,\ldots,\sigma_{n})^{-1}$ and then applying $f \star \vec{g}$.} \label{fig:tau2}
\end{figure}

\begin{example} For $n \in \mathbb{Z}_{\geq 0}$, define $\Delta_n \subset \mathbb{R}^n_{\geq 0}$ by
\[\Delta_n = \left\{\vec{p} \in \mathbb{R}_{\geq 0}^n \ \middle| \ \sum_i p_i = 1\right\}\]
(so $\Delta_{0} = \varnothing$); that is, $\Delta_n$ is the set of probability distributions on $[n]$. Furthermore, define 
\begin{equation*}
\vec{p} \star (\vec{q}_1,\ldots,\vec{q}_n) = (p_1q_{1,1},\ldots,p_1q_{1,m_1},\ldots,p_nq_{n,m_n}).
\end{equation*}
One can interpret $p \star \vec{q}$ as a two-step probabilistic process: first one draws $X \in [n]$ via probability distribution $p$; and then one draws $Y \in [m_X]$ via probability distribution $q_X$. The resulting random variable $Y \in \left[\sum_i m_i\right]$ has probability distribution $p \star \vec{q}$. Upon defining $\id$ and $\tau_{\sigma}$ in a straightforward way, one obtains the \emph{probability operad} $\underline{\Delta}$.
\end{example}

\begin{example}
The construction of the operad $\OEnd(X)$ from the beginning of this section can also be done in any symmetric monoidal category $(\CC,\otimes)$. For an object $X \in \ob(\CC)$, define $\End_n(X) := \Hom_{\CC}(X^{\otimes n},X)$. Then $\id$ is again the identity map, and $\tau_{\sigma}$ is precomposition with the induced isomorphism $X^{\otimes n} \iso X^{\otimes n}$ that sends the $\sigma(i)$-th component to the $i$-th component. Furthermore, $f \star \vec{g}$ is defined to be the composition
\[
X^{\otimes \sum_i m_i} \iso \bigotimes_{i=1}^n X^{\otimes m_i} \xrightarrow{\bigotimes_i g_i} X^{\otimes n} \stackrel{f}{\longrightarrow} X.
\]
This defines the operad $\OEnd_{\CC}(X)$. We write $\OEnd(X)$ when $\CC$ is clear. In this paper, $\otimes$ will always be the binary product (in a category that has these).
\end{example}

\begin{remark}
Some authors define a set $O_I$ for each finite set $I$, representing the maps $X^I \rightarrow X$ \cite{doubek2017properads}. This is `cleaner' in the sense that it does not assume an ordering on the arguments; the cost is paid in having a slightly more elaborate definition of $\tau_{\sigma}$. The two definitions are equivalent via the identification $O_n = O_{[n]}$. We choose our definition as it is more convenient for our metric computation algorithms later on, as well as to emphasize that we think of AT metrics as $n$-ary functions.
\end{remark}

\subsection{Operad algebras}

The motivation behind defining operads was to axiomatize $\OEnd(X)$. The link between operads and sets of functions can be made more direct by considering \emph{operad algebras}, which are sets (or more generally, category objects) upon which an operad acts. To define this, we first define operad morphisms. 

\begin{definition}\label{def:opmorph}
Let $\underline{O} = (O,\tau,\id,\star),\underline{O}' = (O',\tau',\id,\star)$ be two operads. A \emph{morphism} $\varphi\colon \underline{O} \rightarrow \underline{O}'$ is a collection of maps $\varphi_n\colon O_n \rightarrow O'_n$ for each $n$, such that:
\begin{enumerate}
\item For all $n,m_1,\ldots,m_n \in \mathbb{Z}_{\geq 0}$ and $f \in O_n$, $g_i \in O_{m_i}$ one has
\begin{align*}
&\varphi_{\sum_i m_i}(f \star (g_1,\ldots,g_n))\\
&= \varphi_n(f) \star (\varphi_{m_1}(g),\ldots,\varphi_{m_n}(g));
\end{align*}
\item For each bijection $\sigma\colon [n] \iso [n]$ one has $\varphi_n \circ \tau_{\sigma} = \tau'_{\sigma} \circ \varphi_{n}$.
\end{enumerate}
\end{definition}

We omit the subscript $n$ from $\varphi_n(f)$ when $n$ is clear. As one expects, an operad morphism is a map between operads that preserves all relevant structure. An \emph{operad algebra} is then a morphism $\underline{O} \rightarrow \OEnd_{\mathcal{C}}(X)$.

\begin{definition}
Let $\CC$ be a category with finite products, and let $\underline{O}$ be an operad. An \emph{$\underline{O}$-algebra in $\CC$} is a pair $(X,\varphi)$ where $X$ is an object of $\CC$, and $\varphi\colon \underline{O} \rightarrow \OEnd_{\CC}(X)$ is an operad morphism.
\end{definition}

\begin{example}
Let $X$ be a convex subset of $\mathbb{R}^N$ for some $N$. To define a $\underline{\Delta}$-algebra structure on $X$ (in the category $\cat{Set}$), we need to define a map $\varphi(\vec{p})\colon X^n \rightarrow X$ for each $\vec{p} \in \Delta_n$. We do this by defining $\varphi(\vec{p})(x_1,\ldots,x_n) = \sum_{i=1}^n p_i x_i$, which can be shown to satisfy the axioms of Definition \ref{def:opmorph}. Note that if $X$ is a linear subspace of $\mathbb{R}^N$, then this actually defines a $\underline{\Delta}$-algebra in the category $\cat{Vec}_{\mathbb{R}}$ of real vector spaces, since each $\varphi(\vec{p})$ is a linear map.
\end{example}

\section{The attack tree operad}

In this section, we show how attack trees form an operad, and we show how AT metrics can be defined as algebras over this operad. We first define so-called anchored ATs, which are necessary to define the sets $\AT_n$ that make up the operad $\OAT$. 

\begin{definition} \label{def:anchor}
Let $T = (N,E,\gamma)$ and $T' = (N',E',\gamma')$ be ATs with $n$ BASs.
\begin{enumerate}
    \item An \emph{anchoring of $T$} is a bijection $\mu\colon B_T \iso [n]$. The pair $(T,\mu)$ is called an \emph{anchored AT}. 
    \item Let $\mu$ and $\mu'$ be anchorings of $T$ and $T'$. An \emph{anchor isomorphism $(T,\mu) \iso (T',\mu')$} is a graph isomorphism $m\colon N \iso N'$, such that $\gamma = \gamma' \circ m$ and $\mu = \mu' \circ m$.
    \item We call $(T,\mu)$, $(T',\mu')$ equivalent if there exists an anchor isomorphism $(T,\mu) \iso (T',\mu')$.
\end{enumerate}
\end{definition}

We are now in a position to define the operad of attack trees. The elements of $\AT_n$ are anchored ATs with $n$ BASs, and, as we will make more precise below, $\star$ is modular composition.

\begin{definition} \label{def:atoperad}
For $n \in \mathbb{Z}_{\geq 0}$, let $\AT_n$ be the set of equivalence classes of anchored ATs with $n$ BASs (so $\AT_0 = \varnothing$). Let $\recht{id} \in \AT_1$ be the AT consisting of a single BAS. Let $T = (N,E,\gamma,\mu) \in \AT_n$ and $T_i = (N_i,E_i,\gamma_i,\mu_i) \in \AT_{m_i}$ for each $i \leq n$. Assume $N$ and the $N_i$ are pairwise disjoint. We define $(N',E',\gamma',\mu') = T \star (T_1,\ldots,T_n) \in \AT_{\sum_i m_i}$ by
\begin{align*}
N' &= \{v \in N \mid \gamma(v) \neq \BAS\} \cup \bigcup_i N_i,\\
E' &= \ldb(v,w) \in E \mid \gamma(w) \neq \BAS\rdb\\
& \quad \quad \uplus \ldb(v,\R{T_i}) \mid (v,\mu^{-1}(i)) \in E\rdb \uplus \biguplus E_i,\\
\gamma'(v) &= \begin{cases}
\gamma(v), & \textrm{ if $v \in N$}\\
\gamma_i(v) & \textrm{ if $v \in N_i$}
\end{cases},\\
\mu'(v) &= \sum_{i' < i} m_i + \mu_i(v) \text{ if $v\in N_i$}.
\end{align*}
For each $\sigma\colon [n] \iso [n]$ define $\tau_{\sigma}(T,\mu) = (T,\sigma \circ \mu)$. Then these data together define the \emph{Attack tree operad} $\OAT$.
\end{definition}

Every equivalence class in $\AT_n$ has a representative for which $B_T = \{a_1,\ldots,a_n\}$ and $\mu(a_i) = i$. Unless specified otherwise, in what follows we always take such a representative, and we suppress $\mu$ in the notation.  The composition $T \star (T_1,\ldots,T_n)$ has an elaborate definition, but it is just the modular composition $T[T_1/a_1,\ldots,T_n/a_n]$, i.e., each BAS $a_i$ is replaced with the AT $T_i$.

The objects of $\OAT$ are isomorphism classes of anchored ATs, rather than (anchored) ATs themselves. This makes the definition more elaborate, but is necessary to express the fact that two ATs with the same gates and topological structure should have the same metrics.

We are now in a position to define AT metrics. Traditionally, a metric assigns to each $T \in \AT_n$, and each collection of BAS values $(x_i)_{i \leq n}$, where each $x_i$ is in a given set $X$, a value $x_T$ representing which is again in $X$. Shifting the perspective slightly, a metric assigns to each $T \in \AT_n$ a map $X^n \rightarrow X$, whose input is $(x_n)_{i \leq n}$ and whose output is $x_T$. In other words, we have a map $\AT_n \rightarrow \End_n(X)$, which then directly leads to the following definition:

\begin{definition}[AT metric] \label{def:atmetric}
Let $\CC$ be a category with finite products. An \emph{attack tree metric} in $\CC$ is an $\OAT$-algebra in $\CC$.
\end{definition}

The fact that an $\OAT$-algebra arises from an operad morphism $\varphi\colon\OAT \rightarrow \OEnd_{\CC}(X)$ has an important implication: The preservation of $\star$ means that the metric allows for \emph{modular analysis}. To calculate the metric on $T[T'/a]$ given the $x_i$ for each BAS $a_i$, first calculate $x_{T'}$ using $\varphi(T')$, and then take $x_a = x_{T'}$ as an imput for $\varphi(T)$ to calculate $x_T$. Many existing AT metrics have been shown to satisfy modular analysis \cite{mauw2005foundations,lopuhaa2021attack}; furthermore, modular analysis fits within the philosophy that an AT can be refined by replacing a BAS with a new AT without affecting quantitative analysis. In the next section, we will give necessary and sufficient conditions for AT metrics to fall under Definition \ref{def:atmetric}.

\begin{remark}
We allow AT metrics to live in another category than sets. This is mainly useful for stating that a metric is structure-preserving in some way. For instance, the fact that for every $T$ the map $\vec{x} \mapsto x_T$ is continuous (order-preserving, linear,...) means that the metric lives in the category $\cat{Top}$ ($\cat{Pos},\cat{Vec}_{\mathbb{R}},\ldots$).
\end{remark}

\begin{example} \label{exa:metric} \begin{enumerate}
    \item Consider the structure function $\struc{T}$ from Definition \ref{def:struc}; we show how this can be interpreted as an AT metric in the category $\cat{Set}$. To do this, we take $X = \mathbb{B}$, and we have to define a map $\phstr\colon \AT_n \rightarrow \End_n(\BB)$ for every $n \geq 1$. This means that for every AT $T$ with $n$ BASs, we have to define a map $\phstr(T)\colon \BB^n \rightarrow \BB$. The structure function can be viewed as such, by taking $\phstr(T)(\vec{b}) = \struc{T}(\vec{b},\R{T})$ for all $\vec{b} \in \BB^n$. A straightforward, but technical proof shows that this $\phstr$ satisfies Definition \ref{def:opmorph}, so $(\BB,\phstr)$ is an AT metric.
    
    Since ATs do not contain $\mathtt{NOT}$-gates, the function $\vec{b} \mapsto \struc{T}(\vec{b},\R{T})$ is monotonous. This means that we can regard $\phstr$ as a map $\OAT \rightarrow \OEnd_{\cat{Pos}}(\BB)$, so $(\BB,\phstr)$ is an $\OAT$-algebra in $\cat{Pos}$. We denote $\OEnd_{\cat{Pos}}(\BB) =: \OMBool$, the operad of monotonous Boolean functions.
    \item Give a probability $p_i \in [0,1]$ to each BAS $a_i$. The  \emph{total attack probability} is
\begin{equation*}
\recht{TAP}(T,\vec{p}) = \sum_{\vec{b} \in \recht{Suc}_T} \prod_{i=1}^n  p_i^{b_i}(1-p_i)^{1-b_i} \in [0,1],
\end{equation*}
i.e., $\recht{TAP}(T,\vec{p})$ is the probability that an attack is succesful given the compromise probabilities of all BASs (recall that $\recht{Suc}_T$ is the set of all successful attacks). Then $\varphi^{\recht{TAP}}\colon \OAT \rightarrow \End([0,1])$ given by $\varphi^{\recht{TAP}}(T)(\vec{p}) = \recht{TAP}(T,\vec{p})$ is an operad morphism, and so total attack probability is an AT metric (in the category of topological posets, in fact). It cannot be written as one of the two semiring metrics below.
    \item For an AT $T$, let $M_T$ be the set of minimal elements in $\recht{Suc}_T$, for the partial order inherited from $\BB^n$; this is the set of minimal succesful attacks, usually called \emph{minimal attacks}. Furthermore, let $D = (X,\triangle,\triangledown)$ be a semiring, i.e., $X$ is a set, $\triangle$ and $\triangledown$ are binary commutative associative operations on $X$, and $\triangledown$ distributes over $\triangle$. For $\vec{x} \in X^n$, the security metric $\hat{x}(T)$ is defined as
\begin{equation} \label{eq:metCarlos}
\hat{x}(T) = \bigtriangledown_{\vec{b} \in M_T} \bigtriangleup_{i\colon b_i = 1} x_i.
\end{equation}
These are the \emph{propositional semiring metrics} as defined in \cite{9925106}. For example, if $D = (\mathbb{Z}_{\geq 0},\min,+)$, then $\hat{x}$ is the \emph{minimal cost} metric: the minimal cost an attacker needs to succesfully attack the system.

Define a map $\varphi^{\recht{ps}}_{D,n}\colon \AT_n \rightarrow \End_n(V)$ by $\varphi^{\recht{ps}}_{D,n}(T)(\vec{x}) = \hat{x}(T)$; this defines an $\OAT$-algebra $(X,\varphi^{\recht{ps}}_{D})$ in $\cat{Set}$. To show that $\varphi^{\recht{ps}}$ is an operad morphism, i.e., that it preserves $\star$, is quite non-trivial; compare Theorem 9.2 of \cite{9925106}. Changing $M_T$ by other sets of attacks in \eqref{eq:metCarlos} yields different, noncompatible metrics, see for instance the so-called set semantics of \cite{bossuat2017evil}.
\item Let $T \in \AT_n$ and let $x \in X^n$, for a semiring $D = (X,\triangle,\triangledown)$. we extend $x$ recursively to a $\tilde{x} \in V^N$ by
\begin{equation*}
\tilde{x}_v = \begin{cases}
x_i, & \textrm{ if } v = b_i,\\
\bigtriangledown_{w \in \ch(v)} \tilde{x}_w, & \textrm{ if } \gamma(v) = \tOR,\\
\bigtriangleup_{w \in \ch(v)} \tilde{x}_w, & \textrm{ if } \gamma(v) = \tAND.
\end{cases}
\end{equation*}
The security metric for $T$ is then defined $\tilde{x}(T) = \tilde{x}_{\R{T}}$. The map $\varphi_{D,n}^{\recht{bu}}\colon \AT_n \rightarrow \End_n(V)$ given by $\varphi_{D,n}^{\recht{bu}}(T)(\vec{x}) = \tilde{x}(T)$ defines an $\OAT$-algebra $(X,\varphi_D^{\recht{bu}})$ in $\cat{Set}$. Note that this is a different $\OAT$-algebra structure than the one in point 3) above; these are the \emph{bottom-up semiring metrics} of \cite{mauw2005foundations}.
\end{enumerate}
\end{example}

\begin{remark}
From the example above it is clear that even ostensibly straightforward metrics like \emph{min cost} have multiple, noncompatible definitions, even though some of those are based on the same mathematical formalism (semirings). This might seem strange to the reader; in my opinion there are two main causes for this. The first is that there are different ideas on what a node $v$ with multiple parents in an AT represents. In some works, this means that $v$ must be activated multiple times, once for every parent \cite{mauw2005foundations}. In other works, activating $v$ once is enough to serve as input for all its parents \cite{budde2021efficient}. The first interpretation naturally incurs a greater overall attack cost; these metrics also have semantics slightly different from Definition \ref{def:semantics}, phrased in terms of multisets instead.

The second reason for noncompatible definitions is that metrics are often defined alongside the algorithms to compute them; as such, the metric is chosen in such a way that it conforms to the algorithm. We discuss this in detail in Section \ref{sec:ext} for dynamic ATs, where \emph{min time} has multiple definitions, and each of them conforms to the algorithm introduced to calculate it.
\end{remark}

\begin{remark}
In some works, treelike structures in computer science are modeled not by operads but by so-called \emph{multicategories} a.k.a. \emph{coloured operads} \cite{meseguer1989general}. Where operads axiomatize maps $X^n \rightarrow X$, multicategories axiomatize maps $X_1 \times \ldots \times X_n \rightarrow Y$. For AT metrics this higher level of generalization is not necesary, since each BAS' security value lies in the same domain, and the AT's security value lies in this domain as well; for example $X = \mathbb{R}_{\geq 0}$ for the minimal attack time metric. Metrics where this is not the case are discussed in Section \ref{sec:ubi}. In Section \ref{sec:ext}, we show how metrics on attack-defense trees can be modeled as bicoloured operads.
\end{remark}

\section{Necessary and sufficient conditions for operad metrics} \label{sec:ubi}

We argue that almost all metrics that one encounters are, in fact, operad metrics. We show this by giving necessary and sufficient conditions for a metric to be an operad algebra, and argue that these conditions are natural. This section is, in effect, an informal summary of the preceding section.

Setting aside Definition \ref{def:atmetric}, any attack tree metric from the literature can be phrased as assigning to each attack tree $T$ and each vector of $n$ BAS attribute values $\vec{x}$, a security value $F(T,\vec{x})$. To encode this as an operad algebra, one needs to define the function $\varphi(T)$ as $\varphi(T)(\vec{x}) = F(T,\vec{x})$. This is not so much an encoding task as it is a shift of perspective, i.e., an AT metric is not a \emph{value} but a \emph{function}. In order for $\varphi$ to be an operad algebra, i.e., to fit Definition \ref{def:atmetric}, it is necessary and sufficient to satisfy the following conditions:
\begin{enumerate}
\item \emph{Isomorphism independence}: two ATs that are the same, except for the names of the nodes, should yield the same metric value.
\item \emph{Input-output correspondence}: the set of possible metric values must be a subset of the set of possible BAS attribute values.
\item \emph{Allows for modular analysis}: If $T' = T \star (T_1,\ldots,T_n)$, then $\varphi(T')(\vec{x}) = \varphi(T)(\vec{y})$, where $y_i = \varphi(T_i)(x^{(i)})$ and $x^{(i)}$ is the restriction of $\vec{x}$ to the BASs of $T_i$.
\end{enumerate}

Condition 1) is true for any reasonable metric, since metric values should only depend on the mathematical structure. In the operad algebra framework, this is reflected by the fact that $\AT_n$ consists of isomorphism classes of ATs rather than ATs themselves. Some metrics need more information than the AT structure, and this can be done by extending the mathematical framework, as we do for dynamic ATs in Section \ref{sec:ext}.

Condition 2) is usually met: for instance, each BAS is assigned a cost value (in $\mathbb{R}_{\geq 0}$) used to calculate the minimal cost of a succesful attack (again in $\mathbb{R}_{\geq 0}$). However, there are counterexamples: for instance, each BAS can be assigned a compromise time, given as a probability distribution on $\mathbb{R}_{\geq 0}$,  and the desired metric might be the \emph{mean time to compromise} of the entire AT, which is an element of $\mathbb{R}_{\geq 0}$ \cite{mcqueen2006time}.

In such cases, it is often possible to put the metric in the operad algebra framework by retaining additional structure. For instance, the \emph{mean time to compromise} metric above can be extended so that the metric value is not just the expected value of compromise time, but the entire probability distribution of compromise time; then both the BAS values and the metric values are probability distributions on $\mathbb{R}_{\geq 0}$. Retaining this additional structure does have the downside of potentially making metric calculation more complicated.

Condition 3) is the crux of Definition \ref{def:atmetric}, as this ensures that $\varphi$ respects the operad operation $\star$. This is almost always satisfied: while it is possible to create pathological examples of metrics that do not meet condition 3), like Example \ref{ex:path} below, all metrics from the literature known to me that satisfy 2) also satisfy 3). Modular analysis is natural from the hierarchical nature of attack trees, in which an AT can be further refined by replacing a BAS with a subtree describing its vulnerabilities. Condition 3) then states that the metric respects this refinement: calculating the metric for the large AT should yield the same result as first calculating the metric value of the subtree, and then using this as an input for further calculation. Since AT refinement is such a natural operation, it is no surprise that many metrics respect this refinement. This also shows why operad algebras are such a natural fit for AT metric formalization, as the operad axioms capture the one property that is, in our mind, fundamental to the nature of AT metrics. An example of a metric for attack-defense trees that does \emph{not} allow for modular analysis is given in Section \ref{sec:ext}.

Although condition 3) is very natural for AT metrics, it can nonetheless be non-straightforward to show that a metric satisfies it. When a metric is defined bottom-up, like Examples \ref{exa:metric}.1) and \ref{exa:metric}.4), or top-down, condition 3) is usually met automatically. However, when a metric is defined taking the entire AT structure into account, like \ref{exa:metric}.2) and \ref{exa:metric}.3), a proof of the applicability of modular analysis can be quite nontrivial \cite{lopuhaa2021attack}.

\begin{example} \label{ex:path}
Consider the \emph{size} metric of an AT, that assigns to an AT $T = (N,E,\gamma)$ the integer $\varphi(T) = |N|$. We can consider this as a function $\mathbb{N}^n \rightarrow \mathbb{N}$ by taking it to be the constant function with value $|N|$. Then this does not satisfy modular analysis: If $T' = (N',E',\gamma')$ satisfies $T' = T \star (T_1,\ldots,T_n)$, then $\varphi(T')(\vec{x}) = |N'|$ regardless of $\vec{x}$, while $\varphi(T)(\vec{y}) = |N|$.
\end{example}

\section{Extensions of the AT framework} \label{sec:ext}


In this section, we showcase how the operad algebra framework can be extended to define metrics for extensions of ATs. We consider two popular extensions, dynamic ATs and attack-defense trees, and their metrics. The general technique, however, has wider applicability: to define metrics on an extended AT formalism, first define the operad of such extended ATs. Then a metric is simply an algebra over that operad.

\subsection{Dynamic ATs}

\begin{figure}
\centering
\begin{subfigure}{0.49\linewidth}
\centering
\begin{tikzpicture}[
and/.style={and gate US,rotate=90,draw,fill = lightgray},
or/.style={or gate US,rotate=90,draw,fill = lightgray},
bas/.style={circle,draw,fill = lightgray}]
\draw (1.5,-1) node [bas] (b) {\small $a$};
\draw (2.5,-1) node [bas] (c) {\small $b$};
\draw (2,0) node [and] (a1) {};
\draw (3,0) node [bas] (a2) {$c$};
\draw (2.5,1) node [and] (o) {\rotatebox{270}{\small $\rightarrow$}};
\draw (b) -- (a1) -- (o) -- (a2);
\draw (a1) -- (c);
\end{tikzpicture}
\caption{A dynamic AT. In order to reach the root, $a$ and $b$ can be activated in parallel, and once both are done $c$ can be activated. The total attack time is $\max(t_a,t_b)+t_c$.} \label{fig:DAT}
\end{subfigure}
\begin{subfigure}{0.49\linewidth}
\centering
\begin{tikzpicture}[
and/.style={and gate US,rotate=90,draw,fill = lightgray},
or/.style={or gate US,rotate=90,draw,fill = lightgray},
bas/.style={circle,draw,fill = lightgray},
co/.style={diamond,draw,fill=lightgray}]
\draw (0.5,0) node [or,fill=red!20] (rt) {};
\draw (-0.5,-1) node[co,fill=red!20] (c1) {};
\draw (1.5,-1) node[bas,fill=red!20] (b1) {$d$};
\draw (-0.5,-2) node[bas,fill=red!20] (b2) {$a$};
\draw (0.5,-2) node[co,fill=green!20] (c2) {};
\draw (0.5,-3) node[bas,fill=green!20] (b3) {$b$};
\draw (1.5,-3) node[bas,fill=red!20] (b4) {$c$};
\draw (b4) -- (1.5,-2) -- (c2) -- (0.5,-1) -- (c1) -- (rt) -- (b1);
\draw (b3) -- (c2);
\draw (b2) -- (c1);
\end{tikzpicture}
\caption{An attack-defense tree. The attacker (red) can reach the root by attack $a$; this can be prevented by the defender's (green) action $b$. This can itself be stopped by attacker action $c$. Alternatively, the attacker can reach the root by attack $d$ which cannot be stopped by the defender.} \label{fig:ADT}
\end{subfigure}
\caption{A dynamic AT and an attack-defense tree.}
\end{figure}
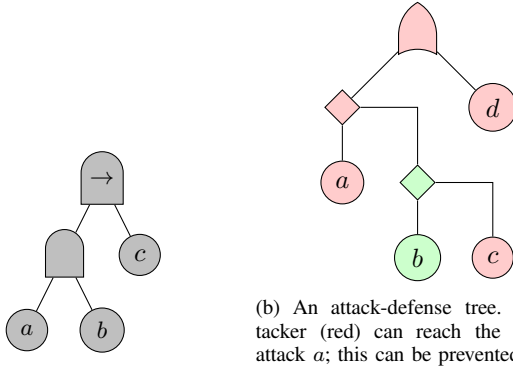

Dynamic ATs (DATs) are used when not only the set of BAS undertaken by the attacker are relevant, but also their relative order \cite{jhawar2015attack}. DATs have a new gate type, $\tSAND$ (``sequential $\tAND$''), where $\tSAND(v_1,v_2)$ is only activated if both $v_1$ and $v_2$ are activated, and $v_1$ is activated before $v_2$ (see Figure \ref{fig:DAT}). Thus a $\tSAND$-gate also needs to know the relative order of its children, leading to the following definition.

\begin{definition} \label{def:DAT}
A \emph{dynamic attack tree} is a triple $T = (N,E,\gamma,\alpha)$, where $(N,E)$ is a rooted directed acyclic multigraph and $\gamma$ is a function $\gamma\colon N \rightarrow \{\tOR,\tAND,\BAS,\tSAND\}$ such that $\gamma(v) = \BAS$ if and only if $v$ is a leaf of $(N,E)$. Furthermore, $\alpha$ consists of, for every $v \in N$ with $\gamma(v) = \tSAND$ with $n$ children, a bijection $\alpha_v\colon [n] \iso \ch(v)$.
\end{definition}

Informally, modular composition of DATs is exactly the same as that of ATs, by replacing a BAS with another AT. This can be formalized analogous to Definition \ref{def:mod}. One can also define anchorings and isomorphisms of DATs analogous to Definition \ref{def:anchor}, leading to the definition of the operad $\ODAT$ of DATs. Then DAT metrics are defined as follows.

\begin{definition}
Let $\mathcal{C}$ be a category with finite products. A \emph{DAT} metric in $\mathcal{C}$ is an $\ODAT$-algebra in $\mathcal{C}$.
\end{definition}

\begin{example} The metrics below are all noncompatible, and yield the four definitions of \emph{min time} of Figure 1.
\begin{enumerate}
    \item One way to define the semantics on a DAT $B_T$  is by defining an attack to be a tuple $(A,\prec)$, where $A \subseteq B_T$ and $\prec$ is a partial order on $A$; the intuition is that $a \prec b$ means that $a$ must be finished before $b$ is started \cite{budde2021efficient,lopuhaa2021attack}. If each BAS is assigned a duration $x_a \in \mathbb{R}_{\geq 0}$, then the total duration of an attack $(A,\prec)$ is equal to
    \[
    \recht{t}(A,\prec) = \max_{\substack{C \recht{ max. chain}\\ \recht{of 
 } (A,\prec)}} \sum_{a \in C} x_a.
    \]
    Here $C$ ranges over the maximal chains of $A$, i.e., the maximal subsets on which $\prec$ is a linear order; all BASs in such a chain must happen sequentially, which explains the summation. Once we define the set of minimal attacks $M_T$ on $T$, we then define the \emph{minimal attack time} as
    \[
    \recht{mt}(T) = \min_{(A,\prec) \in M_T} \recht{t}(A,\prec).
    \]
    There are multiple, slightly conflicting definitions of $M_T$, leading to different metrics \cite{budde2021efficient,lopuhaa2021attack}. Both of these choices can be shown to satisfy modular analysis and hence can be viewed as DAT metrics $\ODAT \rightarrow \OEnd(\mathbb{R}_{\geq 0})$. This definition of \emph{min time} can also be extended to more general \emph{dynamic semiring metrics} \cite{budde2021efficient} which extend the semiring framework of Example \ref{exa:metric}.3) with a third operator $\lozenge$.
    \item Given such a dynamic semiring $(X,\triangledown,\triangle,\lozenge)$, we can extend the bottom-up metric of Example \ref{exa:metric}.4) with $\tilde{x}_v  = \lozenge_{w \in \ch(v)} \tilde{x}_w$ if $\gamma(v) = \tSAND$, yielding the \emph{bottom-up dynamic semiring metrics} of \cite{jhawar2015attack}. For \emph{min time}, one has $\triangledown = \min$, $\triangle = \max$, $\lozenge = {+}$. This metric can also be expressed in terms of series-parallel graph semantics. As bottom-up methods, it is straightforward to show that modular analysis applies, and that this can be expressed as a $\ODAT$-algebra.
    \item A top-down approach to DAT metrics based on priced timed automata (PTA) is developed in \cite{kumar2015quantitative}. A PTA, constructed in UPPAAL from the DAT, can be used to calculate metrics such as time and cost. Again one can show that these metrics can be regarded as $\ODAT$-algebras, as this is a straightforward consequence from its top-down definition. However, this approach does not consider the feasibility of attacks, yielding different values of \emph{min time} from the definitions above.
\end{enumerate}
\end{example}

\subsection{Attack-Defense Trees}

Attack-defense trees (ADTs) were introduced to not only model attacker or \emph{proponent} actions and goals, but also defender or \emph{opponent} actions and goals. To this end, each node $v$ is assigned a label $\varrho(v)$, either $\pp$ or $\oo$, to mark it as an opponent or proponent action or goal. $\tOR$- and $\tAND$-gates marked with an actor can only have children belonging to the same actor. A new ``counter'' gate $\tC$ is added, that must have exactly two children with different labels: if $v = \tC(v_1,v_2)$ with $\varrho(v) = \varrho(v_1) \neq \varrho(v_2)$, then $v$ is activated if and only if $v_1$ is activated, but $v_2$ is not. This allows a defender to stop attack $v_1$ from propagating, by activating defense $v_2$ (see Figure \ref{fig:ADT}). Formally, ADTs are defined as follows.

\begin{definition}
A \emph{attack-defense tree} is a quadruple $T = (N,E,\gamma,\varrho)$ where $(N,E)$ is a rooted directed acyclic graph and $\gamma\colon N \rightarrow \{\tOR,\tAND,\BAS,\tC\}$ and $\varrho\colon N \rightarrow \{\pp,\oo\}$ are functions satisfying:
\begin{enumerate}
    \item $\gamma(v) = \BAS$ if and only if $v$ is a leaf;
    \item if $v$ is a nonleaf with $\ch(v) = v_1,\ldots,v_n$, then:
    \begin{enumerate}
    \item if $\gamma(v) \in \{\tOR,\tAND\}$, then $\varrho(v_i) = \varrho(v)$ ;
    \item if $\gamma(v) = \tC$, then $n = 2$ and $\varrho(v_1) \neq \varrho(v_2)$.
    \end{enumerate}
\end{enumerate}
\end{definition}

Unlike ATs and DATs, modular composition for ADTs is more nuanced: a BAS can only be replaced by an ADT whose top node agent ($\varrho$-value) is the same as the BAS agent. This can be phrased in terms of \emph{bicoloured operads}, an operad in which only `matching' elements can be composed. Below, we briefly sketch the idea behind coloured operads; for more information the reader is referred to \cite{yau2016colored}.

Instead of $\OEnd(X)$, we consider two sets (or in general, objects in a category) $X_{\pp}$, $X_{\oo}$, and maps of the form $X_{\pp}^n \times X_{\oo}^m \rightarrow X_{\pp}$ and $X_{\pp}^n \times X_{\oo}^m \rightarrow X_{\oo}$. These form sets $\Bic_{n,m,\pp}(X_{\pp},X_{\oo})$ and $\Bic_{n,m,\oo}(X_{\pp},X_{\oo})$. On these sets, one has a composition operation $\star$ that is defined only on matching domains and codomains. Taken all together, this is the structure $\OBic(X_{\pp},X_{\oo})$; axiomatising such a structure leads to the concept of a bicoloured operad. Defining composition as usual, we can deifne the bicoloured operad $\OADT$ of attack-defense trees. This leads to the following definitions:

\begin{definition}
Let $\mathcal{C}$ be a category with finite products.
\begin{enumerate}
    \item Let $\underline{B}$ be a bicoloured operad. A $\underline{B}$-\emph{algebra} in $\mathcal{C}$ is triple $(X_{\pp},X_{\oo},\varphi)$ where $X_{\pp},X_{\oo}$ are objects in $\CC$ and $\varphi\colon \underline{B} \rightarrow \OBic(X_{\pp},X_{\oo})$ is a morphism of bicoloured operads.
    \item An \emph{ADT metric} in $\mathcal{C}$ is an $\OADT$-algebra in $\mathcal{C}$.
\end{enumerate}
\end{definition}

\begin{example}
Consider a tuple $(X,\triangledown^{\pp},\triangle^{\pp},\square^{\pp},\triangledown^{\oo},\triangle^{\oo},\square^{\oo})$, where $X$ is a set and the last six elements are binary operations on $X$. Via such a set two ADT metrics are defined in \cite{kordy2018quantitative}. First, one can consider a bottom-up definition similar to Example \ref{exa:metric}.4), where the operator of $v$ depends on both $\gamma(v)$ and $\varrho(v)$ ($\square^{\pp},\square^{\oo}$ are for $\tC$-gates). For instance, the minimal cost of an attack that succeeds regardless of defender actions corresponds to the tuple $([0,\infty],\min,+,+,+,\min,\min)$, with each $\oo$-BAS having attribute value $\infty$. As a bottom-up metric, it is straightforwardly shown to satisfy modular analysis, and it can be expressed as an $\OADT$-algebra with $X_{\oo} = X_{\pp} = X$.

Alternatively, similar to Example \ref{exa:metric}.3) one can first define a set of succesful attacks, and then define the metric globally analogous to \eqref{eq:metCarlos}. This is done via so-called \emph{set semantics} in \cite{kordy2018quantitative}. However, this does not generally yield a metric in the operad-theoretic sense, as this definition is not compatible with modular decomposition; for that one needs additional assumptions on the operations and their interplay. Precisely defining the required axioms is beyond the scope of this paper.
\end{example}

\section{Bottom-up algorithm for metric computation} \label{sec:bu}

By design, our AT metrics are considerably more general than other definitions in the literature: we wish to encapsule many existing definitions into one common framework. A consequence is, however, that existing algorithms to calculate AT metrics cannot be shown to work in general under our definition. In this section and the following one, we generalize existing metric calculation algorithms to the operad algebra setting, and give sufficient operad-theoretic properties, such that the algorithms can be applied to metrics that satisfy these properties. This section is dedicated to the bottom-up algorithm first defined in \cite{mauw2005foundations}. For the metrics it applies to it is the state-of-the-art method with only linear complexity; thus the increased generality of the operad framework does not come at a loss of efficiency.  Throughout this section, we consider metrics in the category $\cat{Set}$.

\subsection{The operators $C_n$ and $D_n$}

\begin{figure}
\centering
\begin{tikzpicture}[
and/.style={and gate US,rotate=90,draw,fill = lightgray},
or/.style={or gate US,rotate=90,draw,fill = lightgray},
bas/.style={circle,draw,fill = lightgray}]
\draw (0,1) node[or] (a) {{\color{lightgray}$x_n$}};
\draw (-2,-1) node[bas] (b1) {$a_1$};
\draw (-1,-1) node[bas] (b2) {$a_2$};
\draw (2,-1) node[bas] (bn) {$a_n$};
\draw (b1) -- (a);
\draw (b2) -- (a);
\draw (bn) -- (a);
\draw[dashed] (0,-1) -- (1,-1);
\end{tikzpicture}

\begin{tikzpicture}[
and/.style={and gate US,rotate=90,draw,fill = lightgray},
or/.style={or gate US,rotate=90,draw,fill = lightgray},
bas/.style={circle,draw,fill = lightgray}]
\draw (0,1) node[and] (a) {{\color{lightgray}$x_n$}};
\draw (-2,-1) node[bas] (b1) {$a_1$};
\draw (-1,-1) node[bas] (b2) {$a_2$};
\draw (2,-1) node[bas] (bn) {$a_n$};
\draw (b1) -- (a);
\draw (b2) -- (a);
\draw (bn) -- (a);
\draw[dashed] (0,-1) -- (1,-1);
\end{tikzpicture}
\caption{The ATs $T_{\tOR,n}$ and $T_{\tAND,n}$.} \label{fig:torand}
\end{figure}

For $n \in \mathbb{Z}_{\geq 1}$, let $T_{\tOR,n} \in \AT_n$ be the AT with $n$ BASs and a single non-BAS (the root), which is an $\tOR$-gate, see Figure \ref{fig:torand}. For an $\OAT$-algebra $\varphi\colon \OAT \rightarrow \OEnd(X)$ in $\cat{Set}$, let $C_n\colon X^n \rightarrow X$ be the map $\varphi(T_{\tOR,n})$. Because $\varphi$ is an operad morphism, the operator $C_n$ does not depend on the order of its arguments, since the BAS of $T_{\tOR,n}$ can be permuted freely without changing its isomorphism class of anchored AT. Therefore, $C_n$ is symmetric. Similarly, there is a symmetric operator $D_n\colon X^n \rightarrow X$ coming from the analogously-defined AT $T_{\tAND,n}$. The operators $C_n$ and $D_n$ depend on the operad algebra $(X,\varphi)$, but we will omit this from the notation.

\begin{example}
Consider the bottom-up metric from Example \ref{exa:metric}.4). It is easy to see that here $C_n(\vec{x}) = \bigtriangledown_{i=1}^n x_i$, and $D_n(\vec{x}) = \bigtriangleup_{i=1}^n x_i$. Less obvious is that the metric from Example \ref{exa:metric}.3) has the same $C_n,D_n$.
\end{example}

In the example above $C_n$ and $C_m$ for $n \neq m$ are clearly related: they are the same operation applied to more arguments. This is not true in general: in our definition there need not be any relation between $C_n$ and $C_m$. In fact, any choice of symmetric operators $C_n$, $D_n$ can be extended to an AT metric.

\begin{algorithm}[t]
	\KwIn{$T = (N,E,\gamma) \in \AT_n$; $\varphi\colon \OAT \rightarrow \OEnd(X)$; $\vec{x} \in X^n$; $v \in N$.}
	\KwOut{$\BU(T,\varphi,\vec{x},v) = \varphi(T_v)(\vec{x}_v) \in X$}
	\BlankLine
	\uIf{$v = a_i$}{%
		\Return{$x_i$}
	} \uElseIf{$v = \tOR(w_1,\ldots,w_m)$}{%
		$\forall i \leq m\colon y_i \leftarrow \BU(T,\varphi,\vec{x},w_i)$\;
\Return{$C_n(y_1,\ldots,y_n)$}
	} \Else { \tcp{Now $v = \tAND(w_1,\ldots,w_m)$}%
 $\forall i \leq m\colon y_i \leftarrow \BU(T,\varphi,\vec{x},w_i)$\;
\Return{$D_n(y_1,\ldots,y_n)$}}
	\caption{The algorithm $\BU$. Recall that $T_v$ is the subDAG of $T$ with root $v$, so to compute $\varphi(T)(\vec{x})$ one takes $v = \R{T}$.}
	\label{alg:BU}
\end{algorithm}

\subsection{The bottom-up algorithm}

A `naive' algorithm to calculate the value AT-metric bottom-up is presented in Algorithm \ref{alg:BU}. 
In this algorithm, we write $\vec{x}_{w}$ for the vector in $X^{|B_{T_{w}}|}$ consisting of all $x_i$ such that the BAS $a_i$ is part of the AT $T_{w}$ (recall that $T_w$ is the subDAG of $T$ with root $w$). Thus $\BU(T,\varphi,\vec{x},v)$ calculates the metric value of the AT $T_v$, for BAS metric values inherited from $(T,\vec{x})$. In order to get $\varphi(T)(\vec{x})$, we take $v = \R{T}$. The algorithm works by applying, at every node $v$, either $C_n$ or $D_n$ to the values of its children, all the way up to the root, whose value is the AT metric. For specific metrics, this algorithm has been discussed many times in the existing literature, see for instance \cite{mauw2005foundations,kordy2010foundations,9925106}. Algorithm \ref{alg:BU} is concise and efficient: it has time complexity $\mathcal{O}(|N|+|E|)$. However, its main drawback is that it does not calculate every AT metric correctly \cite{9925106}. Nevertheless, Algorithm \ref{alg:BU} works when when the underlying graph is an actual tree and not just any DAG.

\begin{definition}
An AT $T = (N,E,\gamma)$ is called \emph{treelike} if the underlying graph $(N,E)$ is a (rooted) tree.
\end{definition}

\begin{theorem} \label{thm:BU}
Let $T \in \AT_n$, let $(X,\varphi)$ be an $\OAT$-algebra, and let $\vec{x} \in X^n$. If $T$ is treelike, then $\varphi(T)(\vec{x}) = \BU(T,\varphi,\vec{x},\R{T})$.
\end{theorem}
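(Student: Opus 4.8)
The plan is to prove the stronger statement printed as the output specification of Algorithm~\ref{alg:BU}, namely that $\BU(T,\varphi,\vec{x},v) = \varphi(T_v)(\vec{x}_v)$ for \emph{every} node $v \in N$; the theorem is then the special case $v = \R{T}$, for which $T_v = T$ and $\vec{x}_v = \vec{x}$. I would prove this by structural induction on the subtree $T_v$, i.e.\ induction on $|N_{T_v}|$. Because $T$ is treelike, each $T_v$ is itself a rooted tree, so the induction is well founded and, crucially, the subtrees rooted at distinct children of a node are \emph{node-disjoint}. This disjointness is the property that makes the whole argument go through and is exactly where treelikeness is used.

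For the base case, if $v = a_i$ is a leaf then $T_v$ is the single-BAS tree, which is the operad identity $\id \in \AT_1$. Since $\varphi$ is an operad morphism it sends the identity to the identity, so $\varphi(T_v) = \varphi(\id) = \id_X$ and hence $\varphi(T_v)(\vec{x}_v) = x_i$, matching the value $x_i$ returned by the first branch of the algorithm.

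For the inductive step, suppose $v = \tOR(w_1,\ldots,w_m)$ (the $\tAND$ case is identical, with $D_m$ replacing $C_m$). The key structural observation is that, because $T_v$ is a tree, it decomposes as a single modular composition
\[
T_v \;=\; T_{\tOR,m} \star (T_{w_1},\ldots,T_{w_m}),
\]
where $T_{\tOR,m}$ is the $\tOR$-gate of Figure~\ref{fig:torand}: replacing each of its $m$ BASs by the subtree $T_{w_i}$ reconstructs exactly the $\tOR$-rooted tree $T_v$, and the node-disjointness of the $T_{w_i}$ guarantees that no BAS is identified or duplicated, so the $\star$ of Definition~\ref{def:atoperad} really does produce $T_v$. (In a general DAG this step fails, since a shared descendant would appear once in $T_v$ but several times on the right-hand side.) Applying $\varphi$ and using preservation of $\star$ gives $\varphi(T_v) = C_m \star (\varphi(T_{w_1}),\ldots,\varphi(T_{w_m}))$, and unwinding the definition of $\star$ in $\OEnd(X)$ yields
\[
\varphi(T_v)(\vec{x}_v) = C_m\bigl(\varphi(T_{w_1})(\vec{x}_{w_1}),\ldots,\varphi(T_{w_m})(\vec{x}_{w_m})\bigr).
\]
By the induction hypothesis each inner term equals $\BU(T,\varphi,\vec{x},w_i) = y_i$, so the right-hand side is $C_m(y_1,\ldots,y_m)$, which is precisely the value the algorithm returns at $v$.

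The main obstacle is the bookkeeping of anchorings in the decomposition $T_v = T_{\tOR,m} \star (T_{w_1},\ldots,T_{w_m})$: the operation $\star$ installs the anchoring that lists the BASs of the $T_{w_i}$ in consecutive blocks in a fixed order of the children, whereas $\vec{x}_v$ is indexed by the anchoring that $T_v$ inherits from $T$. To make the equality of anchored ATs (equivalence classes) literal, I would reconcile the two orderings using the symmetry of $C_m$ --- equivalently, the $\tau$-equivariance of $\varphi$ together with the fact that permuting the children of an $\tOR$/$\tAND$ node does not change the isomorphism class --- so that the splitting of $\vec{x}_v$ into the blocks $\vec{x}_{w_1},\ldots,\vec{x}_{w_m}$ is the correct one. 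This reindexing is routine but is the only genuinely fiddly part; the conceptual content lies entirely in the tree-only modular decomposition above.
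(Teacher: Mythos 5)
Your proposal is correct and follows essentially the same route as the paper's proof: bottom-up induction on $v$ proving the strengthened claim $\BU(T,\varphi,\vec{x},v)=\varphi(T_v)(\vec{x}_v)$, with the leaf case handled by preservation of $\id$ and the gate case by the decomposition $T_v = T_{\tOR,m}\star(T_{w_1},\ldots,T_{w_m})$, which is valid precisely because treelikeness makes the child subtrees disjoint. Your explicit treatment of the anchoring bookkeeping is a small added rigor the paper leaves implicit, but it does not change the argument.
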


Theorem \ref{thm:BU} does not hold for DAG-shaped ATs, but it has been shown to hold for a number of metrics \cite{kordy2018quantitative,9925106}. For the remainder of this section, we give a necessary and sufficient, operad-theoretic condition under which Theorem \ref{thm:BU} also holds for DAG-shaped ATs. We start by developing the necessary machinery.

\subsection{Scoperads}

\begin{wrapfigure}[8]{r}{3cm}
\centering
\begin{tikzpicture}[
and/.style={and gate US,rotate=90,draw,fill = lightgray},
or/.style={or gate US,rotate=90,draw,fill = lightgray},
bas/.style={circle,draw,fill = lightgray}]
\draw (0,0) node[and] (r) {\rotatebox{270}{}};
\draw (-0.5,-1) node[or] (f) {\rotatebox{270}{}};
\draw (0.5,-1) node[or] (s) {\rotatebox{270}{}};
\draw (0,-2) node[bas] (b) {$b$};
\draw (1,-2) node[bas] (c) {$c$};
\draw (-1,-2) node[bas] (a) {$a$};
\draw (f) -- (r) -- (s) -- (b);
\draw (s) -- (c);
\draw (a) -- (f) -- (b);
\end{tikzpicture}
\end{wrapfigure}
The essential ingredient of the proof of Theorem \ref{thm:BU} is that every treelike $T$ can be modularly decomposed, via $\star$, into basic ATs of the forms $T_{\tOR,n},T_{\tAND,n}$. As this does not hold for general, DAG-shaped $T$, $\BU$ does not correctly calculate AT metrics for every $T$. From a graph-theoretical perspective every AT is built up from $\tOR$- and $\tAND$-gates. However, the operad structure of $\OAT$, i.e. modular composition, does not `see' this: ATs such as $\tAND(\tOR(a,b),\tOR(b,c))$ are indecomposable from an operad perspective. To amend this discrepancy, we introduce \emph{scoperads} (``surjectively complete operads''): operads with additional structure, that allows us to decompose any AT into $\tOR/\tAND$-gates on an operad level. The necessary structure is the existence of $\tau_{\sigma}$ for all surjective $\sigma$: as we will show below, this allows us to merge BASs in $\OAT$.

\begin{definition}
A \emph{scoperad} is a tuple $\underline{O} = (O,\tau,\id,\star)$ of a set $O_n$ for each $n \in \mathbb{Z}_{\geq 0}$; a \emph{map} $\tau_{\sigma}\colon O_{n} \rightarrow O_{n'}$ for each \underline{surjective} map $\sigma\colon [n] \twoheadrightarrow [n']$ for $n,n' \in \mathbb{Z}_{\geq 0}$; an element $1 \in O_1$; and a map $\star \colon O_n \times \prod_{i=1}^n O_{m_i} \rightarrow O_{\sum_i m_i}$ for all $n,m_1,\ldots,m_n \in \mathbb{Z}_{\geq 0}$. These must satisfy 1)--4) of Definition \ref{def:operad} for all $\sigma$, and 5) for bijective $\sigma$.
\end{definition}

\begin{figure}
\centering
\begin{tikzpicture}[
and/.style={and gate US,rotate=90,draw,fill = lightgray},
or/.style={or gate US,rotate=90,draw,fill = lightgray},
bas/.style={circle,draw,fill = lightgray}]
\draw (0,0) node[and] (r) {\rotatebox{270}{}};
\draw (-0.5,-1) node[or] (f) {\rotatebox{270}{}};
\draw (0.5,-1) node[bas] (s) {$a_3$};
\draw (0,-2) node[bas] (b) {$a_2$};
\draw (-1,-2) node[bas] (a) {$a_1$};
\draw (a) -- (f) -- (r) -- (s);
\draw (f) -- (b);
\draw (0,-3) node {$T$};

\draw[->] (2,-0.5) -- (2,-1.5);
\draw[->] (2.5,-0.5) -- (2.5,-1.5);
\draw[->] (3,-0.5) -- (2.5,-1.5);
\draw (2,-0.2) node {$1$};
\draw (2.5,-0.2) node {$2$};
\draw (3,-0.2) node {$3$};
\draw (2,-1.8) node {$1$};
\draw (2.5,-1.8) node {$2$};
\draw (2.5,-3) node {$\sigma$};

\draw (5,0) node[and] (r) {\rotatebox{270}{}};
\draw (4.5,-1) node[or] (f) {\rotatebox{270}{}};
\draw (5,-2) node[bas] (b) {$a_2$};
\draw (4,-2) node[bas] (a) {$a_1$};
\draw (a) -- (f) -- (r) -- (b);
\draw (f) -- (b);
\draw (4.5,-3) node {$\tau_{\sigma}T$};
\end{tikzpicture}
\caption{An example of $\tau_{\sigma}T$ for surjective $\sigma$ and AT $T$.} \label{fig:scop}
\end{figure}
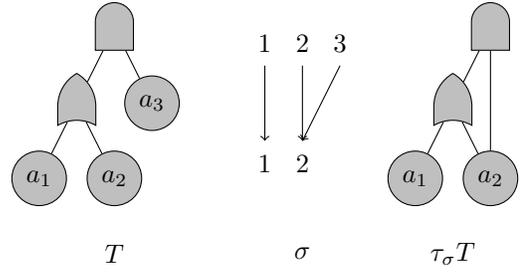

\begin{example} \label{exa:scoperad}
\begin{enumerate}
\item For an object $X$ in a category with finite products, consider the operad $\OEnd(X)$. We can consider this as a scoperad, by defining the map $\tau_{\sigma}\colon \End_n(X) \rightarrow \End_{n'}(X)$ as follows: For $i' \in [n']$, let $\varrho_{i'}\colon X \rightarrow X^{\sigma^{-1}(i')}$ be the diagonal embedding, and let $\varrho = \prod_{i' \in I'} \varrho_{i'}\colon X^{n'} \rightarrow X^n$; then for $f\colon X^n \rightarrow X$ one has $\tau_{\sigma}(f) = f \circ \varrho$. If $X$ is a set, then for $\vec{x} \in X^n$ we have $\tau_{\sigma}(f)(\vec{x}) = f(\sigma^*\vec{x})$, where $(\sigma^*\vec{x})_i = x_{\sigma(i)}$ for all $i$. This is the same as for bijective $f$, except now a coefficient $x_{\sigma(i)}$ can appear multiple times.
\item Let $T \in \AT_n$ and let $\sigma\colon [n] \twoheadrightarrow [n']$ be surjective. Define $\tau_{\sigma}(T)$ by merging two BASs $a_i,a_{i'}$ whenever $\sigma(i) = \sigma(i')$; see Fig.~\ref{fig:scop} for an example. Thus $\tau_{\sigma}(T) = (N',E',\gamma')$ where
\begin{align*}
N' &= N \setminus B_T \cup \{\tilde{a}_1,\ldots,\tilde{a}_{n'}\},\\
E' &= \ldb (v,w) \in E \mid v,w \notin B_T \rdb\\
& \quad \quad \quad \uplus \ldb(v,\tilde{a}_{\sigma(i)}) \mid (v,a_i) \in E\rdb,\\
\gamma'(v) &= \begin{cases}
\gamma(v), & \text{ if $v \in N \setminus B_T$},\\
\BAS, & \text{ if $v = \tilde{a}_i$}.
\end{cases}
\end{align*}
This gives $\OAT$ the structure of a scoperad.
\end{enumerate}
\end{example}


\subsection{The AT scoperad}

For us, the key advantage of defining scoperads is that now every AT can be decomposed into $T_{\tOR,n},T_{\tAND,n}$, as is shown by the following result:

\begin{definition}
An AT is called a \emph{prime AT} if it is isomorphic to $T_{\tOR,n}$ or $T_{\tAND,n}$ for some $n$.
\end{definition}

\begin{lemma} \label{lem:prime}
In the scoperad $\OAT$, every AT not equal to $\id \in \AT_1$ can be obtained from prime ATs and operations of the form $\star$ and $\tau_{\sigma}$.
\end{lemma}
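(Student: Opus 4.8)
The plan is to prove the statement by strong induction on the number of gates $s = |N \setminus B_T|$, i.e. the number of non-$\BAS$ nodes of $T$. Since $T \neq \id$ its root is not a leaf, hence not a $\BAS$, so $s \geq 1$. The base case is $s = 1$: here the root $r$ is the unique gate, with type $g = \gamma(r) \in \{\tOR,\tAND\}$, and all of its children are BASs. If $r$ has $k$ outgoing edges reaching $m$ distinct BASs, then $T$ is obtained from the prime AT $T_{g,k}$ by merging its $k$ leaf-placeholders into $m$ groups: $T = \tau_\sigma(T_{g,k})$ for the surjection $\sigma\colon [k] \twoheadrightarrow [m]$ sending each edge to the group of its target BAS (when $k=m$ no merging is needed and $T = T_{g,k}$ is already prime). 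Thus $T$ is generated from a prime AT and a single $\tau_\sigma$.

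For the inductive step, suppose $s \geq 2$. Choose a gate $v$ of maximal depth; such a $v$ is not the root (otherwise the root would have only BAS children and $s=1$), and all children of $v$ are BASs. Write $g' = \gamma(v)$ and let $v$ have $l$ outgoing edges. I form a smaller AT $T^-$ by collapsing $v$: replace $v$ by a fresh BAS $z$ inheriting all incoming edges of $v$, delete the edges out of $v$, and delete any former child of $v$ left unreachable from the root. One checks that $T^-$ is again a well-formed AT — the new node $z$ is a leaf and a $\BAS$; no gate loses all of its children, since a gate pointing to $v$ now points to $z$, and a BAS is deleted only when its sole parent was $v$ — and that $T^-$ has exactly $s-1$ gates. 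As $s-1 \geq 1$ we have $T^- \neq \id$, so the induction hypothesis expresses $T^-$ through primes, $\star$ and $\tau_\sigma$.

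It remains to recover $T$ from $T^-$ by one modular composition and one merge. Expanding the placeholder $z$ back into the prime AT $T_{g',l}$, formally $T^- \star (\id,\ldots,\id,T_{g',l},\id,\ldots,\id)$ with $T_{g',l}$ in the slot of $z$ and $\id$ in every other slot, reinstates the gate $v$ with $l$ fresh BAS children. The crucial point is that this reproduces sharing correctly: even when $v$ has several parents, $z$ is a \emph{single} BAS of $T^-$, so $\star$ replaces it by the single subtree $T_{g',l}$ to which all those parents now point, recreating $v$ as one shared gate rather than as several copies. Finally the $l$ fresh leaves may have to be identified with one another (if $v$ had a multi-edge) or with surviving BASs of $T^-$ (former children of $v$ that had another parent and hence still occur in $T^-$); a single surjective $\tau_\sigma$ performs all these identifications, yielding $T = \tau_\sigma\!\big(T^- \star (\id,\ldots,T_{g',l},\ldots,\id)\big)$, which is a generator expression.

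I expect the main obstacle to be exactly this interaction between $\star$ and $\tau_\sigma$ in the presence of node sharing. A naive ``unfold into a tree, then merge leaves'' strategy fails, because merging leaves can never identify two internal gates, so a shared internal gate — such as the root of $\tAND(o,o)$ with a single $o = \tOR(a,b)$ — can only be produced by merging placeholders \emph{before} expanding them. Collapsing a bottom gate to a BAS and re-expanding is precisely what forces the merges to happen at the BAS level and in the right order. The remaining work is the bookkeeping confirming that $T^-$ is a legal AT and that the reconstructing $\sigma$ is well defined and surjective; this is routine, but it is where all the care is needed.
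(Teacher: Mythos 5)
Your proof is correct, but it organizes the induction differently from the paper, and the comparison is worth making explicit. The paper inducts on the lexicographic pair (number of gates, number of BASs with multiple parents) and splits the inductive step into two cases: if some bottom gate $v$ has only unshared children, it is peeled off by a single $\star$ (reducing the gate count); otherwise a shared BAS is first split into independent copies, so that $T = \tau_{\sigma}(T')$ with $T'$ having the same gate count but less sharing, and the secondary measure decreases. Your argument instead uses the gate count alone as the induction measure, and makes the inductive step uniform: collapse a maximal-depth gate to a placeholder BAS $z$, apply the induction hypothesis to $T^-$, and recover $T$ as $\tau_{\sigma}\bigl(T^- \star (\id,\ldots,T_{g',l},\ldots,\id)\bigr)$, with one surjection restoring both multi-edges and sharing between the fresh leaves and the surviving BASs. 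The base cases are identical (a single-gate AT is $\tau_{\sigma}$ of a prime), and your observation that gate sharing can only arise from merging BAS placeholders \emph{before} expansion is exactly the mechanism at work in both proofs. What your version buys is a simpler well-founded order and a single-case inductive step (the paper's secondary measure, as stated, even has its comparison direction reversed relative to the splitting step it is meant to justify, which your argument never needs); what the paper's version buys is that each inductive step applies only one generator at a time, which makes the subsequent use of the lemma in the proof of Theorem \ref{thm:BUalg} (where one checks that the property \eqref{eq:toprove} is preserved separately under $\tau_{\sigma}$ and under $\star$) slightly more direct. Your bookkeeping — that $T^-$ is a well-formed AT with $s-1$ gates and that the reconstructing $\sigma$ is a well-defined surjection onto $B_T$ — checks out against Definition \ref{def:atoperad} and Example \ref{exa:scoperad}.2).
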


Based on Example \ref{exa:scoperad}, we can define a \emph{scoperad AT metric} to be a scoperad morphism $\varphi\colon \OAT \rightarrow \OEnd(X)$, i.e., an operad morphism that preserves $\tau_{\sigma}$ for general surjective $\sigma$. In the following theorem, we show that scoperad metrics are precisely those that can be calculate by Algorithm \ref{alg:BU}. This gives a necessary and sufficient condition for when bottom-up methods work.

\begin{theorem} \label{thm:BUalg}
Let $\varphi\colon \OAT \rightarrow \OEnd(X)$ be an AT-metric. Then $\BU(T,\varphi,\vec{x},\R{T}) = \varphi(T)(\vec{x})$ for all $T,\vec{x}$ if and only if $\varphi$ is a scoperad morphism.
\end{theorem}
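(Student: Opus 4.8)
The plan is to reduce both implications to a single observation: the assignment sending each $T \in \AT_n$ to the map $\beta(T) := \BU(T,\varphi,-,\R{T}) \colon X^n \to X$ is itself a \emph{scoperad} morphism $\beta\colon \OAT \to \OEnd(X)$. This map is well defined on isomorphism classes, since Algorithm \ref{alg:BU} inspects only gate types and graph structure, not node names. Moreover $\beta$ agrees with $\varphi$ on the generators of $\OAT$: directly from the algorithm and the definitions of $C_n,D_n$ one has $\beta(T_{\tOR,n}) = C_n = \varphi(T_{\tOR,n})$, $\beta(T_{\tAND,n}) = D_n = \varphi(T_{\tAND,n})$, and $\beta(\id)=\id=\varphi(\id)$. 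Granting that $\beta$ is a scoperad morphism, the theorem follows at once. For the direction $(\Rightarrow)$, the hypothesis is precisely the equality of maps $\varphi = \beta$, so $\varphi$ inherits the scoperad-morphism property from $\beta$. For $(\Leftarrow)$, both $\varphi$ and $\beta$ are then scoperad morphisms agreeing on the prime ATs and on $\id$; since by Lemma \ref{lem:prime} every AT is built from these generators using $\star$ and $\tau_{\sigma}$, and scoperad morphisms preserve $\star$ and $\tau_{\sigma}$, a routine induction on such a building expression gives $\varphi(T) = \beta(T)$ for all $T$.

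So the real content is to verify that $\beta$ preserves $\star$ and $\tau_{\sigma}$ (preservation of $\id$ and of bijective $\tau_{\sigma}$ then being the easy special cases). Both I would prove by induction following the recursion of Algorithm \ref{alg:BU}. For preservation of $\star$, write $T' = T_0 \star (T_1,\ldots,T_n)$ and recall from Definition \ref{def:atoperad} that $T'$ grafts each $T_i$ onto the BAS $a_i$ of $T_0$. Running the algorithm on $T'$, the recursion first descends into each grafted copy of $T_i$, returning $y_i := \beta(T_i)(\vec{x}^{(i)})$ at its root, and then proceeds through the unchanged non-BAS part of $T_0$, applying $C$ and $D$ exactly as it would on $T_0$ with leaf inputs $y_i$. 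Formalizing this by induction on the non-BAS nodes of $T_0$ yields $\beta(T') = \beta(T_0) \star (\beta(T_1),\ldots,\beta(T_n))$ in $\OEnd(X)$. For preservation of $\tau_{\sigma}$ with $\sigma\colon [n]\twoheadrightarrow[n']$ surjective, note that $\tau_{\sigma}T_0$ (Example \ref{exa:scoperad}) has the same non-BAS nodes and gates as $T_0$, with leaves $a_i,a_{i'}$ merged whenever $\sigma(i)=\sigma(i')$. Since the algorithm reads only the value stored at each BAS, feeding $\vec{y}\in X^{n'}$ to $\tau_{\sigma}T_0$ assigns $y_{\sigma(i)}$ to the leaf coming from $a_i$, so that $\beta(\tau_{\sigma}T_0)(\vec{y}) = \beta(T_0)(\sigma^{*}\vec{y})$ with $(\sigma^{*}\vec{y})_i = y_{\sigma(i)}$; by the scoperad structure on $\OEnd(X)$ (Example \ref{exa:scoperad}, item 1) the right-hand side is exactly $\tau_{\sigma}(\beta(T_0))(\vec{y})$.

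I expect the main obstacle to be the bookkeeping in the $\star$-step: one must match the recursion of Algorithm \ref{alg:BU} on the composite $T'$ with its recursion on $T_0$, while being careful that the $T_i$ may themselves be DAGs and that BASs of $T_0$ (hence, after grafting, whole root subtrees of the $T_i$) may have several parents, in which case the algorithm simply recomputes the same value at each parent. The key conceptual point, isolated in the $\tau_{\sigma}$-step, is that enlarging $\OAT$ to a scoperad by adding $\tau_{\sigma}$ for surjective $\sigma$ supplies exactly the leaf-merging operation that Algorithm \ref{alg:BU} performs implicitly when it reuses one BAS value at all of that BAS's parents. This is precisely why preservation of surjective $\tau_{\sigma}$, and not merely the operad axioms, is the correct necessary and sufficient condition, and it makes the equivalence fall out symmetrically in both directions.
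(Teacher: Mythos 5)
Your proposal is correct and follows essentially the same route as the paper's proof: your claim that $\beta(T)=\BU(T,\varphi,\cdot,\R{T})$ preserves $\tau_{\sigma}$ is exactly the paper's Lemma \ref{lem:busigma}, your claim that it preserves $\star$ is the paper's observation that $\BU$ respects modular decomposition, and both directions then reduce, as in the paper, to the generation statement of Lemma \ref{lem:prime}. Packaging these two facts as ``$\beta$ is a scoperad morphism'' is a mild but genuine tidying: it makes the forward direction ($\BU$ correct $\Rightarrow$ scoperad morphism) immediate from $\varphi=\beta$, where the paper instead re-runs the same computations a second time.
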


For the propositional metrics of Example \ref{exa:metric}.2, Algorithm \ref{alg:BU} has been shown not to work for DAG-like ATs in general, because BAS with multiple parents will occur twice somewhere in the calculation. Theorem \ref{thm:BUalg} shows that this is the only obstruction: if two BAS can be merged or split without consequences for the metric, then the bottom-up approach works.

\begin{remark}
Bottom-up algorithms have also been proposed to calculate metrics on extensions of ATs, by defining new operators for the new gates that have been introduced \cite{kordy2018quantitative,9925106}. The results have been more or less analogous: bottom-up algorithms succeed at calculating metrics on treelike ATs, but only work on general ATs under additional assumptions, such as idempotence of the operators.
\end{remark}

\section{BDD-based algorithms}

Binary decision diagrams (BDDs) form a compact way of representing Boolean functions. Applying these to the structure function of ATs, one can obtain efficient methods to calculate metrics, provided that the metric only depends on the structure function \cite{rauzy1993new,9925106}. In this section, we unify these existing methods under a common, generic algorithm for our operad algebra metrics, and we give sufficient conditions when this algorithm works. For the metrics to which it applies it is the state-of-the-art \cite{9925106}, again showing that there is no loss of efficiency in adopting the generic operad algebra framework. We again assume that all metrics $\varphi\colon \OAT \rightarrow \OEnd(X)$ are in the category $\cat{Set}$.

\subsection{Reduced ordered BDDs}

A BDD is a rooted DAG with two leaves $\tzero$ and $\tone$, and each nonleaf is labeled with a variable $F_1,\ldots,F_n$. Furthermore, each nonleaf has exactly two outgoing edges, labeled $\tzero$ and $\tone$. A BDD encodes a function $f\colon \BB^n \rightarrow \BB$ as follows: given $\vec{b} \in \BB^n$, starting from the root, at each node $v$ with label $F_i$, follow its $\tzero$- or $\tone$-edge depending on the value of $b_{i}$. The value of $f(\vec{b})$ is equal to the leaf we end up in. This allows us to represent Boolean functions in a way that is typically compact \cite{rauzy1997exact,bobbio2013methodology,}, making it suitable for storage and computation.

In this paper, we used \emph{reduced ordered BDDs}, which are BDDs in which the variables $F_i$ occur in order, which are then reduced to be of minimal size by getting rid of redundant and duplicate nodes. The formal definition is as follows:

\begin{definition}
Let $n \geq 0$. A \emph{reduced ordered binary decision diagram (ROBDD)} of $n$ variables is a tuple $B = (N,E,t_N,t_E)$ where:
\begin{itemize}
\item $(N,E)$ is a rooted DAG where every nonleaf has exactly two children.
\item $t_N\colon N \rightarrow \{F_1,\ldots,F_n\} \cup \{\tzero,\tone\}$ satisfies $t_N(v) \in \{\tzero,\tone\}$ iff $v$ is a leaf.
\item $t_E\colon E \rightarrow \{\tzero,\tone\}$ is such that the two outgoing edges of each nonleaf $v$ have different values; the two children are denoted $c_{\tzero}(v)$ and $c_{\tone}(v)$ depending on the edge values.
\item If $(v,v')$ is an edge and $t_N(v) = F_i$, $t_N(v') = F_{i'}$, then $i < i'$.
\item If $v,v'$ are two nodes such that $t_N(v) = t_N(v')$, and for nonleaves furthermore $c_{\tzero}(v) = c_{\tzero}(v')$ and $c_{\tone}(v) = c_{\tone}(v')$, then $v = v'$.
\end{itemize}
\end{definition}

\begin{figure}
\centering
\begin{tikzpicture}[
and/.style={and gate US,rotate=90,draw,fill = lightgray},
or/.style={or gate US,rotate=90,draw,fill = lightgray},
bas/.style={circle,draw,fill = lightgray},
blok/.style={rectangle,draw,fill = lightgray}]
\draw (0,-2) -- (0.5,-1);
\draw (0,0) node[and] (r) {\rotatebox{270}{}};
\draw (-0.5,-1) node[or] (f) {\rotatebox{270}{}};
\draw (0.5,-1) node[or] (s) {\rotatebox{270}{}};
\draw (0,-2) node[bas] (b) {$a_1$};
\draw (1,-2) node[bas] (c) {$a_3$};
\draw (-1,-2) node[bas] (a) {$a_2$};
\draw (f) -- (r) -- (s);
\draw (s) -- (c);
\draw (a) -- (f) -- (b);
\draw (0,-2.7) node {$0.7$};
\draw (1,-2.7) node {$0.3$};
\draw (-1,-2.7) node {$0.5$};

\draw (3.5,0) node[blok] (t1) {$F_1$};
\draw (4.5,-1) node[blok] (t2) {$F_2$};
\draw (3.5,-2) node[blok] (t3) {$F_3$};
\draw (2.5,-3) node[blok] (t4) {$\tzero$};
\draw (4.5,-3) node[blok] (t5) {$\tone$};
\path (t1) edge node[above right] {$\tone$} (t2)
(t2) edge node[right] {$\tone$} (t5)
(t3) edge node[above] {$\tone$} (t5);
\path[dashed] (t1) edge node[left] {$\tzero$} (t4)
(t2) edge node[above left] {$\tzero$} (t3)
(t3) edge node[above] {$\tzero$} (t4);
\end{tikzpicture}
\caption{An AT (with probability values) and its ROBDD.} \label{fig:bdd}
\end{figure}
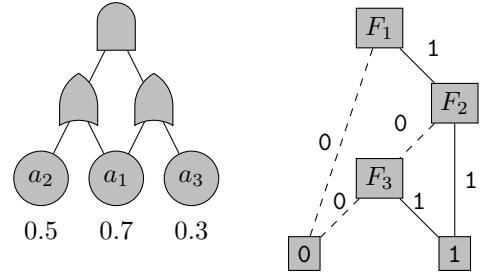

Each Boolean function can be represented by a unique ROBDD \cite{bryant1992symbolic}. The size of the ROBDD representing $f$ is worst-case exponential in $n$, but in practice it is often quite compact \cite{bryant1986graph}. The size depends heavily on the ordering of the $n$ variables, but finding the optimal ordering is NP-complete.

\subsection{The BDD algorithm}

The BDD of $\struc{T}$ has been used to efficiently calculate AT metrics, see below. One necessary condition is that an AT's metric value only depends on its structure function. This is formalized in the definition below. Recall that $\OMBool = \OEnd_{\cat{Pos}}(\BB)$ is the operad of monotonous Boolean functions.

\begin{definition}
Let $\OMBool^{\recht{nc}} \subset \OMBool$ be the sub-operad of nonconstant Boolean functions (so $\MBool^{\recht{nc}}_0 = \varnothing$). An AT metric $\varphi\colon \OAT \rightarrow \OEnd(X)$ is called \emph{propositional} if there exists a $\psi\colon\OMBool^{\recht{nc}} \rightarrow \OEnd(X)$ such that $\varphi = \psi \circ \varphi^{\recht{struc}}$, where $\varphi^{\recht{struc}}\colon \OAT \rightarrow \OMBool^{\recht{nc}}$ is the structure function operad of Example \ref{exa:metric}.1.
\end{definition}

In Example \ref{exa:metric}.1, the map $\varphi^{\recht{struc}}$ has codomain $\OMBool$, but structure functions are always nonconstant: one has $\struc{T}((\tzero,\ldots,\tzero),\R{T}) = \tzero$ and $\struc{T}((\tone,\ldots,\tone),\R{T}) = \tone$. Therefore, one can take $\OMBool^{\recht{nc}}$ for the codomain. In fact, if one does so, then each $\varphi^{\recht{struc}}\colon \AT_n \rightarrow \MBool^{\recht{nc}}_n$ is surjective; hence $\psi$, if it exists, is uniquely defined.

Some examples of propositional metrics are $\varphi^{\recht{struc}}$ itself (trivial) and the metrics $\varphi^{\recht{TAP}}$ and $\varphi^{\recht{ps}}_D$ from Example \ref{exa:metric}. A non-example is given by the bottom-up semiring metrics of Example \ref{exa:metric}.4).

For propositional metrics, one can formulate metric-calculating algorithms by traversing the ROBDD bottom-up. This is presented in Algorithm \ref{alg:BDD}. Such an algorithm requires, apart from $X$, a function $g\colon X^3 \rightarrow X$ prescribing how a node's value should be computed, from the values of its two children and the BAS value of its label. We also need initial values $z_{\tzero},z_{\tone} \in X$. Of course, $g,z_{\tzero},z_{\tone}$ should be chosen so that the algorithm calculates $\varphi(T)(\vec{x})$ as intended.

Algorithm \ref{alg:BDD} has only linear complexity in the size of $B$, and as such is very fast when $B$ is small. Unfortunately, $B$ is of worst-case exponential size. Nevertheless, in practice $B$ is often quite small, making BDD-based algorithms the state-of-the-art approach for the two metrics described in the example below \cite{9925106}, when these metrics cannot be covered by the bottom-up algorithm of Section \ref{sec:bu}.

\begin{algorithm}[t]
	\KwIn{Set $X$, function $g\colon X^3 \rightarrow X$, initial values $z_{\tzero},z_{\tone} \in X$; AT $T = (N,E,\gamma) \in \AT_n$, BAS values $\vec{x} \in X^n$}
	\KwOut{$\mathtt{BDD}(X,g,z_{\tzero},z_{\tone},T,\vec{x}) = \varphi(T)(\vec{x}) \in X$}
	\BlankLine
        $B \leftarrow$ ROBDD representing $\struc{T}(\bullet,\R{T})$\;
        \Return{$\mathtt{BDD\_{BU}}(B,x,\R{B})$}
\vspace{0.25em}        
	\caption{The algorithm $\mathtt{BDD}$ (see Algorithm \ref{alg:BDDBU} for $\mathtt{BDD\_BU}$).}
	\label{alg:BDD}
\end{algorithm}

\begin{algorithm}[t]
	\KwIn{Set $X$, function $g\colon X^3 \rightarrow X$, initial values $z_{\tzero},z_{\tone} \in X$; ROBDD $B = (N,E,t_N,t_E)$, BAS values $\vec{x} \in X^n$,node $v \in N$}
	\KwOut{value $\mathtt{BDD\_BU}(B,\vec{x},v) \in X$}
	\BlankLine
 \uIf{$t_N(v) = b_i$}{
 \Return{$g(x_i,\mathtt{BDD\_BU}(B,\vec{x},c_{\tzero}(v)),\mathtt{BDD\_BU}(B,\vec{x},c_{\tone}(v))$}} 
        \uElseIf{$t_N(v) = \tzero$}{%
		\Return{$z_{\tzero}$}
	} \Else{
\Return{$z_{\tone}$}
	} 
	\caption{The algorithm $\mathtt{BDD\_BU}$. For notational convenience, the arguments $X,g,z_{\tzero},z_{\tone}$ are omitted from the notation.}
	\label{alg:BDDBU}
\end{algorithm}

\begin{example} \label{exa:bdd} We consider two of the metrics from Example \ref{exa:metric}.
\begin{enumerate}
\item The metric $\varphi^{\recht{TAP}}$ of Example \ref{exa:metric}.4 can be calculated this way, by taking $z_{\tzero} = 0$, $z_{\tone} = 1$, and $g(p,q,r) = (1-p)q+pr$ \cite{rauzy1993new}. For example, consider the AT $T$ of Fig.~\ref{fig:bdd} and its ROBDD $B$; for the BAS values we take $\vec{x} = (0.7,0.5,0.3)$. Then we find $\varphi^{\recht{TAP}}(T)(\vec{x})$ by traversing the ROBDD bottom-up:
\begin{align*}
\mathtt{BDD\_BU}(B,\vec{x},\tzero) &= z_{\tzero} &&= 0,\\
\mathtt{BDD\_BU}(B,\vec{x},\tone) &= z_{\tone} &&= 1,\\
\mathtt{BDD\_BU}(B,\vec{x},F_3) &= (1-0.3) \cdot 0 + 0.3 \cdot 1 &&= 0.3,\\
\mathtt{BDD\_BU}(B,\vec{x},F_2) &= (1-0.5) \cdot 0.3 + 0.5 \cdot 1 &&= 0.65,\\
\mathtt{BDD\_BU}(B,\vec{x},F_2) &= (1-0.7) \cdot 0 + 0.7 \cdot 0.65 &&= 0.455.
\end{align*}
Hence $\varphi^{\recht{TAP}}(T)(\vec{x}) = 0.455$. Choosing a different variable order would have resulted in a different ROBDD, but the resulting probability would have been the same.
\item Consider the \emph{min cost} metric of Example \ref{exa:metric}.2; this is calculated by Algorithm \ref{alg:BDD}, for $z_{\tzero} = \infty$, $z_{\tone} = 0$, and $g(x,y,z) = \min(y,x+z)$. This construction can be applied to general propositional semiring metrics $D = (V,\triangle,\triangledown)$, provided that there exist identity elements $1_{\triangle}$,$1_{\triangledown}$ for $\triangle$ and $\triangledown$, and that the semiring is \emph{absorbing}, i.e., $x \triangledown (x \triangle y) = x$ for all $x,y \in V$. Then the propositional semiring metric $\varphi^{\recht{ps}}_D$ can be found using $z_{\tzero} = 1_{\triangledown}$, $z_{\tone} = 1_{\triangle}$, and $g(x,y,z) = y \triangledown (x \triangle z)$. For more details see \cite{9925106}.
\end{enumerate}
\end{example}

\subsection{BDD algorithms for general metrics}

In this section, we give sufficient conditions for the validity of Algorithm \ref{alg:BDD}. The main conditions are that a metric $\varphi$ preserves certain constructions, which are given in Definitions \ref{def:shannon} and \ref{def:expansion} below.

For each $n$, define a partial order $\preceq$ on $\MBool_n$ by $f \preceq f'$ if $f(\vec{b}) \leq f'(\vec{b})$ for all $\vec{b} \in \BB^n$. We use this to define \emph{Shannon composition}, which for monotonous functions is the reverse operation of the well-known Shannon expansion: that is, $f$ and $f'$ are the Shannon cofactors of $\recht{Sh}(f,f')$. 

\begin{definition}\label{def:shannon}
Let $f,f' \in \MBool_{n-1}$ with $f \prec f'$. The \emph{Shannon composition} of $f,f'$ is the map $\recht{Sh}(f,f') \in \MBool_n$ defined by
\begin{equation*}
\recht{Sh}(f,f')(x_1,\ldots,x_n) = f(x_2,\ldots,x_n) \vee (x_1 \wedge f'(x_2,\ldots,x_n)).
\end{equation*}
\end{definition}

The second operation we want to introduce is a formality: we create a $(n+1)$-ary function from an $n$-ary function by adding an irrelevant argument.

\begin{definition} \label{def:expansion}
For $X$, and $n \in \mathbb{Z}_{\geq 0}$, define $\delta_n\colon \End_n(X) \rightarrow \End_{n+1}(X)$ by $(\delta_n f)(x_1,\ldots,x_{n+1}) = f(x_2,\ldots,x_{n+1})$ for $x \in X^{n+1}$.
\end{definition}

Since $\OMBool = \OEnd_{\cat{Pos}}(\mathbb{B})$, this also defines $\delta_n$ on $\MBool_n$.

\begin{wrapfigure}[6]{r}{3cm}
\centering
\begin{tikzpicture}[
and/.style={rectangle,draw,fill = lightgray},
bas/.style={circle,draw,fill = lightgray}]
\draw (0,0) node[and] (r) {$F_1$};
\draw (-0.5,-1) node[and] (f) {$B_f$};
\draw (0.5,-1) node[and] (s) {$B_{f'}$};
\draw[dashed] (f) --node[left]{$\tzero$} (r);
\draw (r) -- node[right]{$\tone$} (s);
\end{tikzpicture}
\end{wrapfigure}
Every function in $\OMBool$ can be created from the constant functions $\tzero,\tone \in \MBool_0$ through repeated applications of $\recht{Sh}$ and $\delta_n$. Furthermore, on BDDs these operations have a straightforward representation: $f$ and $\delta_nf$ are represented by the same ROBDD, and $\recht{Sh}(f,f')$ is represented by the BDD on the right, where $B_f$ and $B_{f'}$ are ROBDDs representing $f$ and $f'$; this BDD can be reduced to a ROBDD by identifying identical nodes on the left and right. Theorem \ref{thm:bdd} uses this idea to connect BDDs to metric calculation: if $\delta_n$ is irrelevant to a metric $\varphi$, and if $\recht{Sh}$ is mapped to the application of $g$, then Algorithm \ref{alg:BDD} correctly calculates $\varphi$.

\begin{theorem} \label{thm:bdd}
Let $\varphi\colon \OAT \rightarrow \OEnd(X)$ be propositional, and let $\psi\colon \OMBool^{\recht{nc}} \rightarrow \OEnd(X)$ be such that $\varphi = \psi \circ \phi^{\recht{struc}}$. Let $\iota$ be the inclusion $\OMBool^{\recht{nc}} \hookrightarrow \OMBool$. Let $g\colon X^3 \rightarrow X$ and $z_{\tzero},z_{\tone} \in X$. Suppose there exists a $\Psi\colon \OMBool \rightarrow \OEnd(X)$ satisfying the following conditions:
\begin{enumerate}
\item $\psi = \Psi \circ \iota$;
\item $\Psi(\tzero) = z_{\tzero}$ and $\Psi(\tone) = z_{\tone}$;
\item $\Psi \circ \delta_n = \delta_n \circ \Psi$ for all $n$;
\item For all $n \geq 1$ and $f,f' \in \MBool_{n-1}$ with $f \prec f'$, and all $\vec{x} \in X^n$, one has
\begin{align*}
&\Psi(\recht{Sh}(f,f'))(\vec{x})\\
&= g\left(x_1,\Psi(f)(x_2,\ldots,x_n),\Psi(f')(x_2,\ldots,x_n)\right).
\end{align*}
\end{enumerate}
Then $\mathtt{BDD}(X,g,z_{\tzero},z_{\tone},T,x) = \varphi(T)(\vec{x})$ for all $T \in \AT_n$ and $\vec{x} \in X^n$.
\end{theorem}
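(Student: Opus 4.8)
The plan is to reduce the theorem to a statement about $\mathtt{BDD\_BU}$ alone, and then prove that statement by induction over the ROBDD. First, observe that the ROBDD $B$ built in Algorithm \ref{alg:BDD} represents the Boolean function $h := \varphi^{\recht{struc}}(T) = \struc{T}(\bullet,\R{T})$, which is nonconstant and monotone. By the propositional hypothesis and condition 1), $\varphi(T)(\vec{x}) = \psi(h)(\vec{x}) = \Psi(\iota(h))(\vec{x}) = \Psi(h)(\vec{x})$. Hence it suffices to show that for the ROBDD $B$ of any $h \in \MBool^{\recht{nc}}_n$ one has $\mathtt{BDD\_BU}(B,\vec{x},\R{B}) = \Psi(h)(\vec{x})$. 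The guiding idea is the remark preceding the theorem: every function in $\OMBool$ is built from $\tzero,\tone$ by $\recht{Sh}$ and $\delta_n$, and $B$ is exactly the recursion tree of $\mathtt{BDD\_BU}$, so each node of $B$ corresponds to one such generator.

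For a node $v$ of $B$, write $h_v$ for the function computed by the sub-BDD rooted at $v$; if $v$ has label $F_i$ then $h_v \in \MBool_{n-i+1}$ with first argument $F_i$, and if $v$ is a leaf then $h_v \in \{\tzero,\tone\} \subset \MBool_0$. I would prove by induction in reverse topological order the claim
\[
\mathtt{BDD\_BU}(B,\vec{x},v) = \Psi(h_v)(x_i,\ldots,x_n),
\]
the right-hand side being the nullary value $\Psi(h_v)$ for a leaf. The base case is immediate from condition 2): a $\tzero$-leaf (resp.\ $\tone$-leaf) returns $z_{\tzero} = \Psi(\tzero)$ (resp.\ $z_{\tone} = \Psi(\tone)$).

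For the inductive step, let $v$ have label $F_i$ and children $v_0 = c_{\tzero}(v)$, $v_1 = c_{\tone}(v)$. Following the two edges realises the Shannon expansion of $h_v$: its low and high cofactors $h_v|_{F_i=\tzero}$ and $h_v|_{F_i=\tone}$, regarded as functions of $F_{i+1},\ldots,F_n$, are precisely the $\delta_n$-paddings $f,f' \in \MBool_{n-i}$ of $h_{v_0},h_{v_1}$ (one $\delta_n$ per variable skipped along the respective edge), and $h_v = \recht{Sh}(f,f')$. Monotonicity of $h$ gives $f \preceq f'$, and reducedness forces $h_v$ to depend genuinely on $F_i$, hence $f \prec f'$ and $\recht{Sh}(f,f')$ is defined. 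Applying the induction hypothesis at $v_0,v_1$ and condition 3) to absorb the padding (since $\delta_n$ on $\OEnd(X)$ likewise inserts an ignored argument) yields $\Psi(f)(x_{i+1},\ldots,x_n) = \mathtt{BDD\_BU}(B,\vec{x},v_0)$ and the analogue for $f'$. Condition 4) then gives
\[
\Psi(h_v)(x_i,\ldots,x_n) = g\bigl(x_i,\Psi(f)(x_{i+1},\ldots,x_n),\Psi(f')(x_{i+1},\ldots,x_n)\bigr),
\]
which is exactly the return value of $\mathtt{BDD\_BU}(B,\vec{x},v)$. Evaluating the claim at $v = \R{B}$ (with one further use of condition 3) if the root label exceeds $F_1$, i.e.\ $h$ ignores some leading variables) gives $\mathtt{BDD\_BU}(B,\vec{x},\R{B}) = \Psi(h)(\vec{x}) = \varphi(T)(\vec{x})$, which is what Algorithm \ref{alg:BDD} outputs.

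I expect the main obstacle to be the arity/index bookkeeping caused by variable skipping: a BDD edge may jump from $F_i$ directly to some $F_j$ with $j > i+1$, so the cofactor identity $h_v = \recht{Sh}(f,f')$ holds only after padding the children's functions with the right number of ignored variables, and one must verify that $\Psi$ commuting with each $\delta_n$ (condition 3) matches these ignored arguments exactly before condition 4) is applied. By contrast, the monotonicity and reducedness needed to secure $f \prec f'$, and the reduction of $\varphi$ to $\Psi \circ \varphi^{\recht{struc}}$, are routine.
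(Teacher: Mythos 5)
Your proposal is correct and follows essentially the same route as the paper's proof: reduce to showing $\mathtt{BDD\_BU}$ computes $\Psi$ on ROBDDs, handle leaves by condition 2), skipped variables by $\delta_n$ and condition 3), and branching nodes by Shannon composition and condition 4). The only difference is bookkeeping — you induct over BDD nodes in reverse topological order and absorb all skipped variables at an edge in one padding step, whereas the paper inducts on the number of variables $n$ and peels off one irrelevant variable at a time — which does not change the substance of the argument.
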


Theorem \ref{thm:bdd} requires a number of conditions, but they are not all equally hard to satisfy. The requirements that $\varphi$ be propositional and that $\delta$ is preserved are necessary conditions: this is because the ROBDD depends only on the structure function, and because adding irrelevant variables does not change the ROBDD. Furthermore, the extension $\Psi$, if it exists, is unique: the constant functions $\tzero$ and $\tone$ have to be sent to the neutral elements for the operators $C_2$ and $D_2$, respectively. This also shows that $z_{\tzero}$ and $z_{\tone}$ are determined by $\psi$. The key property needed to make Algorithm \ref{alg:BDD} work is the fact that $\Psi$ maps Shannon composition to $g$.

\begin{example}
Consider $\varphi^{\recht{TAP}}$ of Example \ref{exa:metric}.4. For a given $\vec{p} \in [0,1]^n$, let $Y_i \sim \recht{Ber}(p_i)$, with all $Y_i$ independent; then $\varphi^{\recht{TAP}}(T)(\vec{p}) = \Pr[\struc{T}(\vec{Y},\R{T}) = \tone]$. We can extend $\varphi^{\recht{TAP}}$ to a $\Psi\colon \OMBool \rightarrow \OEnd([0,1])$ by $\Psi(f)(\vec{p})= \Pr[f(Y) = \tone]$. We show that this maps Shannon composition to the $g$ of Example \ref{exa:bdd}.1). For $f \prec f'$ one has $f'(Y_2,\ldots,Y_n) = \tone$ whenever $f(Y_2,\ldots,Y_n) = \tone$; hence $\recht{Sh}(f,f')(Y) = \tone$ if either $Y_1 = \tone$ and $f'(Y_2,\ldots,Y_n) = \tone$, or $Y_1 = \tzero$ and $f(Y_2,\ldots,Y_n) = \tone$. Hence
\begin{align*}
\Psi(\recht{Sh}(f,f'))(\vec{p}) &= (1-p_1)\Pr[f(Y_2,\ldots,Y_n)=\tone]\\
& \quad \quad +p_1\Pr[f'(Y_2,\ldots,Y_n)=\tone] \\
&= g(p_1,\Psi(f)(p_2,\ldots,p_n),\Psi(f')(p_2,\ldots,p_n)).
\end{align*}
This shows condition 4 of Theorem \ref{thm:bdd}. The other conditions are straightforward to check, and together show that Algorithm \ref{alg:BDD} works for $\varphi^{\recht{TAP}}$.

\end{example}

\section{Conclusion}

This work introduces a new definition of AT metrics based on operad theory. This definition captures all existing definitions, and can be used in a wider, category-theoretical context. It also generalizes existing metric calculation algorithms and gave operad-theoretical conditions as to when they apply. Future work can consider metric calculation on extensions such as dynamic ATs or attack-defense trees, or investigate to what extent our work applies to fault trees.

\section*{Acknowledgements}

This research has been partially funded by ERC Consolidator grant 864075 CAESAR and the European Union’s Horizon 2020 research and innovation programme under the Marie Skłodowska-Curie grant agreement No. 101008233.

\printbibliography

\appendices

\section{Proof of Theorem \ref{thm:BU}}

\begin{proof}
We prove this by bottom-up induction on $v$. If $v$ is the leaf $a_i$, then $\gamma(v) = \BAS$, and $T_v = \id \in \AT_1$. Since $\varphi$ is an operad morphism, $\varphi(T_v) = \id$, and $\varphi(T_v)(\vec{x}_v) = x_i = \BU(T,\varphi,\vec{x},v)$.

Now suppose $v = \tOR(w_1,\ldots,w_m)$, and the statement is true for $w_1,\ldots,w_m$. The recursive definition of $\BU(T,\varphi,\vec{x},w_i)$ depends only on the subDAG $T_{w_i}$. As such one can prove by a straightforward induction proof that
\begin{equation} \label{eq:pfbu1}
\BU(T,\varphi,\vec{x},w_i) = \BU(T_{w_i},\varphi,\vec{x}_{w_i},w_i).
\end{equation}
Now, since $T$ is a tree, the subtrees $T_{w_1},\ldots,T_{w_m}$ are disjoint, and 
\begin{align*}
T &= T_{\tOR,m}[T_{w_1}/a_1,\ldots,T_{w_m}/a_m] \\
&= T_{\tOR,m} \star (T_{w_1},\ldots,T_{w_m}).
\end{align*}
Since $\varphi$ is an operad morphism, we get
\begin{align}
&\varphi(T)(\vec{x}) \nonumber\\
&= \varphi(T_{\tOR,m} \star (T_{w_1},\ldots,T_{w_m}))(\vec{x})\nonumber \\
&= \varphi(T_{\tOR,m})(\varphi(T_{w_1})(\vec{x}_{w_1}),\ldots,\varphi(T_{w_1})(\vec{x}_{w_m})) \label{eq:pfbu2}\\
&= C_m(\BU(T_{w_1},\varphi,\vec{x}_{w_1},w_1),\ldots,\BU(T_{w_m},\varphi,\vec{x}_{w_m},w_m)) \nonumber\\
& = C_m(\BU(T,\varphi,\vec{x},w_1),\ldots,\BU(T,\varphi,\vec{x},w_m)) \label{eq:pfbu3}\\
&= \BU(T,\varphi,\vec{x},v).\nonumber
\end{align}
Here we use the induction hypothesis in \eqref{eq:pfbu3} and \eqref{eq:pfbu1} in \eqref{eq:pfbu2}. This proves the theorem for $\gamma(v) = \tOR$; the case that $\gamma(v) = \tAND$ is completely analogous.
\end{proof}

\section{Proof of Lemma \ref{lem:prime}}

\begin{proof}
Define a preorder $\preceq$ on the set of (isomorphism classes) of ATs by $T \preceq T'$ if $T$ has less non-BAS nodes that $T'$, or if $T$ has as much non-BAS nodes as $T'$, and $T$ has \emph{more} BAS nodes with multiple parents than $T'$. We prove the lemma by induction over $\preceq$. The basis for the induction are the $T$ with only 1 non-BAS node. These are not necessarily of the form $T_{\tOR,n}$ or $T_{\tAND,n}$, as there may be multiple edges between the root and a given BAS. However, we can obtain $T$ from a prime AT by merging BASs via a $\tau_{\sigma}$. More precisely, suppose $T = (N,E,\gamma)$; then $N = \{\R{T},a_1,\ldots,a_n\}$; let $E = \ldb e_1,\ldots,e_m \rdb$. Define $\sigma\colon [m] \twoheadrightarrow [n]$ by $\sigma(i) = j$ if and only if $e_i = (\R{T},a_j)$. Then $T = \tau_{\sigma} T'$, where $T'$ is either $T_{\tOR,m}$ or $T_{\tAND,m}$ depending on $\gamma(\R{T})$. This proves the induction basis.

Now let $T$ be an AT with more than 1 non-BAS node, and let $v$ be a node of $T$ with only BAS children. If every child of $v$ has only $v$ as a parent, then $T = T'[T_v/v]$, where $T' \prec T$ and $T_v$ has only 1 non-BAS node; hence by the induction hypothesis $T$ can be constructed from prime trees. If such a $v$ does not exist, there is a BAS $a$ with multiple parents. Construct a new tree $T'$ with multiple independent copies of $a$, such that each parent is only connected to a single copy. Then $T' \prec T$, and furthermore $T = \tau_{\sigma}(T')$ for a $\sigma$ that merges all the copies of $b$. By the induction hypothesis $T'$, and therefore $T$, can be constructed from prime trees.
\end{proof}

\section{Proof of Theorem \ref{thm:BUalg}}

For the proof we need one auxiliary lemma.

\begin{lemma} \label{lem:busigma}
Let $T \in \AT_n$ be an AT, let $\sigma\colon [n] \rightarrow [n']$ be a surjective map, and let $\tau_{\sigma}T = (N',E',\gamma')$ as in Example \ref{exa:scoperad}.2). Define a map $\pi\colon N \rightarrow N'$ by
\[
\pi(v) = \begin{cases}
a'_j, & \textrm{ if $v = a_i$ and $\sigma(i) = j$,}\\
v, & \textrm{ if $v \notin B_T$}.
\end{cases}
\]
Let $(\varphi,X)$ be an AT metric, and let $\vec{x} \in X^n$. Let $\sigma^*\vec{x}$ be as in Example \ref{exa:scoperad}.1). Then for all $v \in N$
\[
\BU(\tau_{\sigma}T,\varphi,\vec{x},\pi(v)) = \BU(T,\varphi,\sigma^*\vec{x},v).
\]
\end{lemma}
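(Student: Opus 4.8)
I would prove the identity by bottom-up induction on $v$, in the same style as the proof of Theorem \ref{thm:BU}. (For the two sides to be well-typed one should read $\vec{x} \in X^{n'}$, so that $\vec{x}$ is a legal input for $\tau_\sigma T \in \AT_{n'}$ while $\sigma^*\vec{x} \in X^n$ is a legal input for $T \in \AT_n$.) The engine of the induction is a purely structural observation read off from Example \ref{exa:scoperad}.2): the operation $\tau_\sigma$ leaves every non-BAS node and every edge between non-BAS nodes untouched, and merely redirects each edge $(v,a_i)$ to the merged leaf $a'_{\sigma(i)} = \pi(a_i)$. Consequently, for non-BAS $v$ one has $\gamma'(\pi(v)) = \gamma'(v) = \gamma(v)$ (the gate type is preserved since $\pi(v) = v$), and, crucially, an equality of children multisets
\[
\ch_{\tau_\sigma T}(\pi(v)) = \ldb \pi(w) \mid w \in \ch_T(v) \rdb .
\]
Thus $w \mapsto \pi(w)$ is a cardinality-preserving correspondence between the children of $v$ in $T$ and those of $\pi(v)$ in $\tau_\sigma T$, with any merging of BAS children recorded by the multiset multiplicities.

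\textbf{Base case and inductive step.} For the base case $v = a_i$, both sides evaluate to $x_{\sigma(i)}$: the right side returns the $i$-th coordinate of $\sigma^*\vec{x}$, which is $x_{\sigma(i)}$ by definition of $\sigma^*$, and the left side returns the $\sigma(i)$-th coordinate of $\vec{x}$ because $\pi(a_i) = a'_{\sigma(i)}$. For the inductive step, suppose $\gamma(v) = \tOR$ (the $\tAND$ case is identical with $D_m$ in place of $C_m$), and write $m = |\ch_T(v)|$. By Algorithm \ref{alg:BU}, the right side is $C_m$ applied to the values $\BU(T,\varphi,\sigma^*\vec{x},w)$ for $w$ ranging over $\ch_T(v)$, and the left side is $C_m$ applied to the values $\BU(\tau_\sigma T,\varphi,\vec{x},w')$ for $w'$ ranging over $\ch_{\tau_\sigma T}(\pi(v))$. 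Every such $w'$ is of the form $\pi(w)$ for a child $w$ of $v$ strictly below $v$, so the induction hypothesis applied to $w$ gives $\BU(\tau_\sigma T,\varphi,\vec{x},\pi(w)) = \BU(T,\varphi,\sigma^*\vec{x},w)$. Hence the two collections of $C_m$-arguments coincide as multisets, and since $C_m$ is symmetric, $C_m$ returns the same value on both, closing the induction. (The induction is well-founded because the children of $v$ are strict descendants, even when $T$ is DAG-shaped.)

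\textbf{Main obstacle.} The difficulty is not in any single computation but in getting the bookkeeping of the leaf-merging exactly right. The delicate point is that $\pi$ need not be injective on $\ch_T(v)$: two distinct BAS children $a_i, a_{i'}$ with $\sigma(i) = \sigma(i')$ collapse to one leaf appearing with multiplicity two in $\ch_{\tau_\sigma T}(\pi(v))$, and one must verify that the resulting $C_m$-argument $x_{\sigma(i)}$ (counted twice) matches the two arguments $(\sigma^*\vec{x})_i = (\sigma^*\vec{x})_{i'} = x_{\sigma(i)}$ on the $T$-side. The multiset formulation of $E'$ in Example \ref{exa:scoperad}.2 handles this automatically, and the symmetry of $C_m$ and $D_m$, established immediately after their definition, is precisely what lets us disregard the order in which $\BU$ enumerates children. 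I expect no other genuine obstacle: once the children correspondence and gate-preservation are pinned down, the rest is the routine two-case induction above.
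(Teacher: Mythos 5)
Your proof is correct and follows essentially the same route as the paper's: a bottom-up induction on $v$, with the base case reducing to $(\sigma^*\vec{x})_i = x_{\sigma(i)}$ and the inductive step using that $\pi$ carries the children of $v$ (as a multiset) to the children of $\pi(v)$ together with the symmetry of $C_m$ and $D_m$. You simply spell out the multiset bookkeeping that the paper leaves implicit, and your observation that $\vec{x}$ should be read as an element of $X^{n'}$ for the statement to typecheck is a correct reading of the intended setup.
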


\begin{proof}
The proof is a straightforward induction on $T$. If $v$ is the BAS $a_i$ with $\sigma(i) = j$, then 
\begin{align*}
\BU(\tau_{\sigma}T,\varphi,\vec{x},\pi(v)) &= \BU(\tau_{\sigma}T,\varphi,\vec{x},a'_j) \\
&= x_j \\
&= (\sigma^*\vec{x})_i \\
&= \BU(T,\varphi,\sigma^*\vec{x},v).
\end{align*}
If $v = \tOR(w_1,\ldots,w_m)$, then $\pi(v) = \tOR(\pi(w_1),\ldots,\pi(w_m))$. From this we show the statement for $v$ given the induction hypothesis for the $w_i$, and the $\tAND$-case is analogous.
\end{proof}

\begin{proof}[Proof of Theorem \ref{thm:BUalg}]
First assume that $\varphi$ is a scoperad morphism. For all $n$ and all $T \in \AT_n$, we want to prove the property
\begin{equation}
\forall \vec{x} \in X^n \colon \BU(T,\varphi,\vec{x},\R{T}) = \varphi(T)(\vec{x}). \label{eq:toprove}
\end{equation}
Clearly \eqref{eq:toprove} holds for $\id$ and for prime ATs. We will show that \eqref{eq:toprove} is preserved under $\tau_{\sigma}$ and $\star$; by Lemma \ref{lem:prime} this shows that it holds for all $T$.

Suppose \eqref{eq:toprove} holds for $T \in \AT_n$, and let $\sigma\colon [n] \twoheadrightarrow [m]$ be a surjective map. Let $\vec{x} \in X^m$; we aim to show that $\BU(\tau_{\sigma}T,\varphi,\vec{x},\R{\tau_{\sigma}T}) = \varphi(\tau_{\sigma}T)(\vec{x})$. We then have
\begin{align}
\BU(\tau_{\sigma}T,\varphi,\vec{x},\R{\tau_{\sigma}T}) &= \BU(T,\varphi,\sigma^*\vec{x},\R{T}) \label{eq:bualg1}\\
&= \varphi(T)(\sigma^*{\vec{x}}) \label{eq:bualg2}\\
&= \tau_{\sigma}(\varphi(T))(\vec{x}) \label{eq:bualg3}\\
&= \varphi(\tau_{\sigma} T)(\vec{x}). \label{eq:bualg4}
\end{align}
Here \eqref{eq:bualg1} is due to Lemma \ref{lem:busigma}, \eqref{eq:bualg2} is because \eqref{eq:toprove} holds for $T$, \eqref{eq:bualg3} is from the definition of $\tau_{\sigma}$ on $\OEnd(X)$ from Example \ref{exa:scoperad}.1), and \eqref{eq:bualg4} is because $\varphi$ is a scoperad morphism. We conclude that \eqref{eq:toprove} holds for $\tau_{\sigma}(T)$.

Now assume \eqref{eq:toprove} holds for $T \in \AT_n$ and $T_1 \in \AT_{m_1},\ldots,T_n \in \AT_{m_n}$; we aim to prove it for $T'= T \star (T_1,\ldots,T_n)$. Hence, for every $(\vec{x}_{(i)})_{i \leq n} \in \prod_i X^{m_i}$ we want to show that
\begin{align*}
&\BU(T',\varphi,(\vec{x}_{(i)})_{i \leq n},\R{T'}) = \varphi(T')((\vec{x}_{(i)})_{i \leq n}).
\end{align*}
By its bottom-up nature, $\BU$ respects modular decomposition; more formally, 
\begin{align*}
&\BU(T',\varphi,(\vec{x}_{(i)})_{i \leq n},\R{T'}) = \BU(T,\varphi,\vec{y},\R{T}) 
\end{align*}
where $y_i = \BU(T_i,\varphi,\vec{x}_{(i)},\R{T_i})$. Since \eqref{eq:toprove} holds for $T_i$, we have $y_i = \varphi(T_i)(\vec{x}_{(i)})$; since \eqref{eq:toprove} also holds for $T$ we have
\begin{align*}
&\BU(T',\varphi,(\vec{x}_{(i)})_{i \leq n},\R{T'}) = \BU(T,\varphi,\vec{y},\R{T})\\
&= \BU(T,\varphi,(\varphi(T_i)(\vec{x}_{(i)}))_{i \leq n},\R{T})\\
&= \varphi(T)((\varphi(T_i)(\vec{x}_{(i)}))_{i \leq n}) \\
&= (\varphi(T) \star (\varphi(T_1),\ldots,\varphi(T_n)))((\vec{x}_{(i)})_{i \leq n}) \\
&= \varphi(T')((\vec{x}_{(i)})_{i \leq n}).
\end{align*}
Here we used the fact that $\varphi$ preserves $\star$ in the last line. This concludes our proof that $\BU$ works if $\varphi$ is a scoperad morphism.

To prove the opposite direction, we assume \eqref{eq:toprove} for all $T$, and we need to show that $\varphi$ preserves $\tau_{\sigma}$ and $\star$. For the first one, observe that for all $\vec{x}$ and $\sigma$ we have
\begin{align*}
\tau_{\sigma}\varphi(T)(\vec{x}) &= \varphi(T)(\sigma^*\vec{x}) \\
&= \BU(T,\varphi,\sigma^*\vec{x},\R{T}) \\
&= \BU(\tau_{\sigma}T,\varphi,\vec{x},\R{\tau_{\sigma}T}) \\
&= \varphi(\tau_{\sigma}T)(\vec{x}).
\end{align*}
For the second one, for $T'$ and $\vec{y}$ as before we have
\begin{align*}
\varphi(T')((\vec{x}_{(i)})_{i \leq n}) &= \BU(T',\varphi,(\vec{x}_{(i)})_{i \leq n},\R{T'}) \\
&= \BU(T,\varphi,\vec{y},\R{T}) \\
&= \varphi(T)(\vec{y}) \\
&= \varphi(T)((\varphi(T_i)(\vec{x}_{(i)}))_{i \leq n})\\
&= (\varphi(T) \star (\varphi(T_1),\ldots,\varphi(T_n)))((\vec{x}_{(i)})_{i \leq n}),
\end{align*}
which proves that $\varphi$ preserves $\star$. We conclude that $\varphi$ is a scoperad morphism.
\end{proof}

\section{Proof of Theorem \ref{thm:bdd}}

\begin{proof}
For $f \in \MBool_n$, let $B_f$ be the unique ROBDD representing it. We will prove by induction on $n$ the claim that $\mathtt{BDD\_BU}(B_f,\vec{x},\R{B}) = \Psi(f)(x)$ for each $f \in \MBool_n$ and each $\vec{x} \in X^n$; this proves the theorem. Since $\Psi(\tzero) = z_{\tzero}$ and $\Psi(\tone) = z_{\tone}$, it holds for the two elements of $\MBool_0$. Now assume that the claim holds for $n' < n$, and let $f \in \MBool_n$. If $t_N(\R{B_f}) \neq F_1$, then there is no $v$ with $t_N(v) = F_1$, and $f$ does not depend on its first variable; hence $f = \delta_{n-1} f'$ for some $f' \in \MBool_{n-1}$, which furthermore satisfies $B_f = B_{f'}$. It follows that
\begin{align*}
\Psi(f)(x_1,\ldots,x_n) &= \Psi(\delta_{n-1} f')(\vec{x}) \\
&= (\delta_{n-1}\Psi(f'))(\vec{x})\\
&= \Psi(f')(x_2,\ldots,x_n)\\
&= \mathtt{BDD\_BU}(B_{f'},(x_2,\ldots,x_n),\R{B_{f'}}) \\
&= \mathtt{BDD\_BU}(B_{f},x,\R{B_f}).
\end{align*}
This proves the claim for $f$. Now suppose $t_N(\R{B_f}) = F_1$. Let $B_{\tzero},B_{\tone}$ be the sub-BDDs of $B_f$ with roots $c_{\tzero}(\R{B_f}),c_{\tone}(\R{B_f})$, and let $f_{\tzero},f_{\tone}\colon \BB^{n-1} \rightarrow \BB$ be the functions they represent (where a label $F_i$ stands for the $(i-1)$-th entry of a boolean vector, as the labels now range from $F_2$ to $F_n$). Then $f_{\tzero} \prec f_{\tone}$ because $f$ is monotonous, and $f = \recht{Sh}(f_{\tzero},f_{\tone})$. Write $\breve{x} \in X^{n-1}$ for $(x_2,\ldots,x_n)$. It follows that 
\begin{align*}
& \Psi(f)(\vec{x}) \\
&= \Psi(\recht{Sh}(f_{\tzero},f_{\tone}))(\vec{x}) \\
&= g\left(x_1,\Psi(f_{\tzero})(\breve{x})),\Psi(f_{\tone})(\breve{x})\right)\\
&= g\left(x_1,\mathtt{BDD\_BU}(B_{\tzero},\breve{x},\R{B_{\tzero}}),\mathtt{BDD\_BU}(B_{\tone},\breve{x},\R{B_{\tone}})\right)\\
&= g\left(x_1,\mathtt{BDD\_BU}(B_{f},\breve{x},c_{\tzero}(\R{B_f})),\mathtt{BDD\_BU}(B_{f},\breve{x},c_{\tone}(\R{B_f}))\right)\\
&= \mathtt{BDD\_BU}(B_f,\vec{x},\R{B_f}).
\end{align*}
This proves the claim for $f$; by induction this completes the proof.
\end{proof}

\end{document}